\newcommand{\jnecess}[1]{\textcolor{black}{#1}}
\newcolumntype{L}[1]{>{\raggedright\let\newline\\\arraybackslash\hspace{0pt}}m{#1}}
\newcolumntype{C}[1]{>{\centering\let\newline\\\arraybackslash\hspace{0pt}}m{#1}}
\newcolumntype{R}[1]{>{\raggedleft\let\newline\\\arraybackslash\hspace{0pt}}m{#1}}
\newtheorem{theorem}{\textbf{Theorem}}
\newtheorem{lemma}[theorem]{\textbf{Lemma}}
\newtheorem{example}[theorem]{\textbf{Example}}
\newtheorem{definition}[theorem]{\textbf{Definition}}
\newtheorem{openq}[theorem]{\textbf{Open Question}}
\newcommand{\df}[1][\rm{def}]{\stackrel{#1}{=}}
\newtheorem{obs}{\textbf{Observation}}
\newcommand{\ignore}[1]{}
\newcommand{\iid}{\emph{i.i.d.}\xspace}
\newcommand{\ie}{\emph{i.e.,}\xspace}
\newcommand{\eg}{\emph{e.g.,}\xspace}
\def \sets#1{{\{#1\}}}
\newcommand{\cC}{\mathcal{C}}
\newcommand{\cA}{\mathcal{A}}
\newcommand{\cO}{\mathcal{O}}
\newcommand{\cY}{\mathcal{Y}}
\newcommand{\cX}{\mathcal{X}}
\newcommand{\cP}{\mathcal{P}}
\newcommand{\cS}{\mathcal{S}}
\newcommand{\cE}{\mathcal{E}}
\newcommand{\EE}{\mathbb{E}}
\newcommand{\pr}{\text{Pr}}
\newcommand{\norm}[1]{\left|\left|#1\right|\right|}
\newcommand{\argmin}{\operatornamewithlimits{argmin}}
\newcommand{\argmax}{\operatornamewithlimits{argmax}}
\newcommand{\algseq}{\algorithms{seq}}
\newcommand{\algcomp}{\algorithms{compl}}
\newcommand{\algcompsort}{\algorithms{compl-sort}}
\newcommand{\algkosub}{\algorithms{ko-sub}}
\newcommand{\algkomod}{\algorithms{ko-mod}}
\newcommand{\algqssub}{\algorithms{qs-sub}}
\newcommand{\algquickselect}{\algorithms{q-select}}
\newcommand{\algcomb}{\algorithms{comb}}
\newcommand{\algquicksort}{\algorithms{q-sort}}
\newcommand{\algscheffetest}{\textsc{scheffe-test}}
\newcommand{\algseq}{{\cA^{\textnormal{seq}}}}
\newcommand{\algcomp}{{\cA^{\textnormal{compl}}}}
\newcommand{\algscheffetest}{\textsc{Scheffe-test}}
\newcommand{\algcompsort}{{\cA^{\textnormal{compl-sort}}}}
\newcommand{\algkosub}{{\cA^{\textnormal{ko-sub}}}}
\newcommand{\algkomod}{{\cA^{\textnormal{ko-mod}}}}
\newcommand{\algqssub}{{\cA^{\textnormal{qs-sub}}}}
\newcommand{\algquickselect}{{\cA^{\textnormal{q-select}}}}
\newcommand{\algcomb}{{\cA^{\textnormal{comb}}}}
\newcommand{\algquicksort}{{\cA^{\textnormal{q-sort}}}}
\newcommand{\maxx}{x^*}
\newcommand{\bY}{\textbf{Y}}
\newcommand{\din}{\textnormal{in}}
\newcommand{\dout}{\textnormal{out}}
\newcommand{\nonadaptivecomp}{\cC_{\textnormal{non}}}
\newcommand{\adaptivecomp}{\cC_{\textnormal{adpt}}}
\newcommand{\pe}{\cE}
\newcommand{\adaptive}{\textnormal{adpt}}
\newcommand{\nonadaptive}{\textnormal{non}}
\newcommand{\thrr}{\Delta}
\newcommand{\deltacover}{\delta}
\newcommand{\errorcomb}{\epsilon}
\newcommand{\errordens}{\varepsilon}
\newcommand{\errorquicksort}{\epsilon}
\newcommand\blfootnote[1]{%
  \begingroup
  \renewcommand\thefootnote{}\footnote{#1}%
  \addtocounter{footnote}{-1}%
  \endgroup
}
\title{
%Maximum Selection and Sorting with Adversarial Comparators and an
%Application to Density Estimation
Maximum Selection and Sorting with Adversarial Comparators and an Application to Density Estimation
\blfootnote{Part of this paper appeared in \cite{AcharyaJOS14e}.}}
\author{Jayadev Acharya\\EECS, MIT\\\tt{jayadev@csail.mit.edu}
\and
Moein Falahatgar\\ECE, UCSD\\\tt{moein@ucsd.edu}
\and
Ashkan Jafarpour\\Yahoo Labs\\\tt{ashkanj@yahoo-inc.com}
\and
Alon Orlitksy\\ECE \& CSE, UCSD\\\tt{alon@ucsd.edu}
\and
Ananda Theertha Suresh\\Google Research\\\tt{s.theertha@gmail.com}}
\begin{document}
\maketitle

\begin{abstract}
We study maximum selection and sorting of $n$ numbers using pairwise
comparators that output the larger of their two inputs if the inputs
are more than a given threshold apart, and output an
adversarially-chosen input otherwise.
%, where the output of comparing
%two numbers is correct if they differ by more than some threshold,
%and chosen adversarially if they differ by at most that threshold.
We consider two adversarial models. A non-adaptive adversary
that decides on the outcomes in advance based solely on the inputs, 
and an adaptive adversary that can decide on the outcome of each query
depending on previous queries and outcomes. 

Against the non-adaptive adversary, we derive a maximum-selection
algorithm that uses at most $2n$ comparisons in expectation, and a
sorting algorithm that uses at most $2n\ln n$ comparisons in
expectation. These numbers are within small constant factors from the
best possible.  Against the adaptive adversary, we propose a
maximum-selection algorithm that uses $\Theta(n\log (1/{\errorcomb}))$
comparisons to output a correct answer with probability at least
$1-\errorcomb$.  The existence of this algorithm affirmatively
resolves
%, thereby resolving
an open problem of Ajtai, Feldman, Hassadim, and
Nelson~\cite{AjtaiFHN15}.

Our study was motivated by a density-estimation problem where, given
samples from an unknown underlying distribution, we would like to find
a distribution in a known class of $n$ candidate distributions that is
close to underlying distribution in $\ell_1$ distance.
Scheffe's algorithm~\cite{DevroyeL01} outputs a distribution at an
$\ell_1$ distance at most 9 times the minimum and runs in time
$\Theta(n^2\log n)$.  Using maximum selection, we propose an algorithm
with the same approximation guarantee but run time of $\Theta(n\log n)$.
\end{abstract}

\section{Introduction}
\subsection{General background and motivation}
Maximum selection and sorting are fundamental operations with wide-spread
applications in computing, investment, marketing~\cite{AMPP09}, decision
making~\cite{Thurstone27,David63}, and sports.
These operations are often accomplished via pairwise
comparisons between elements, and the goal is to minimize the
number of comparisons.

For example, one may find the largest of $n$ elements by first
comparing two elements and then successively comparing the larger one
to a new element.  This simple algorithm takes $n-1$ comparisons, and
it is easy to see that $n-1$ comparisons is necessary.  Similarly,
\emph{merge sort} sorts $n$ elements using less than $n\log n$
comparisons, close to the information theoretic lower bound of $\log
n!=n\log n -o(n)$.%,
                                                                                        %$\eg$~\cite{CormenLRS2009}.\todo{please
                                                                                        %check}

However, in many applications, the pairwise comparisons may be
imprecise. For example, when comparing two random numbers, such
as stock performances, or team strengths, the output of the
comparison may vary due to chance. 
Consequently, a number of researchers have considered
maximum selection and sorting with imperfect, or noisy, comparators.

The comparators in these models mostly function correctly, but
occasionally may produce an inaccurate comparison result,
where the form of inaccuracy is dictated by the application.
The Bradley-Terry-Luce~\cite{BradleyT52} model assumes that if two
values $x$ and $y$ are compared, then $x$ is selected as the larger
with probability $x/(x+y)$.
Observe that the comparison is correct with probability $\max\{x,y\}/(x+y)\ge 1/2$. 
Algorithms for ranking and estimating weights under this model were
proposed, \eg in~\cite{NegahbanOS12}.
Another model  assumes that the output of any comparator gets reversed
with probability less than $1/2$.
Algorithms applying this model for maximum selection were proposed
in~\cite{AdlerGHKK94} and for ranking in~\cite{KarpK07,BravermanM08}.

\ignore{Other related works in the noisy comparator setting include the
\emph{feedback arc set  problem},~\cite{Alon06,AilonCN08,CoppersmithFR10},
subset ranking~\cite{CossockZ08}, top-$k$ ranking~\cite{XiaLL09}, and
consistency of general ranking under weighted graph
queries~\cite{DuchiMJ10}.}

We consider a third  model where, unlike the previous models, the
comparison outcome can be adversarial.
%One way of modeling this \emph{chance} is to use adversarial comparators with a threshold.  
%Informally,
If the numbers compared are more than a threshold $\thrr$ apart,
the comparison is correct, while if they differ by at most $\thrr$,
the comparison is arbitrary, and possibly even adversarial. 

This model can be partially motivated by physical observations.
Measurements are regularly quantized and often adulterated with
some measurement noise. Quantities with the same quantized value may
therefore be incorrectly compared. In psychophysics, the
Weber-Fechner law~\cite{Ekman59} stipulates that humans can
distinguish between two physical stimuli only when their difference
exceeds some threshold (known as \emph{just
    noticeable difference}).  And in sports, a judge or a home-team
advantage may, even adversarially, sway the outcome of a game between
two teams of similar strength, but not between teams of significantly
different strengths.  Our main motivation for the model
derives from the important problem of density-estimation and
distribution-learning.

\subsection{Density estimation via pairwise comparisons}
In a typical PAC-learning setup~\cite{Valiant84,KearnsMRRSS94},
we are given samples from an unknown distribution $p_0$ in a
known distribution class $\cP$ and would like to find,
with high probability, a distribution $\hat p\in\cP$ such that
$\|\hat p-p_0\|_1<\deltacover$.

One standard approach proceeds in two steps~\cite{DevroyeL01}.
\begin{enumerate}
\item 
Offline, construct a \emph{$\deltacover$-cover} of $\cP$, a finite
collection $\cP_{\deltacover}\subseteq\cP$ of distributions such that
for any distribution $p\in\cP$, there is a distribution
$q\in\cP_{\deltacover}$ such that $\|p-q\|_1<\deltacover$.
\item
Using the samples from $p_0$, find a distribution in
$\cP_{\deltacover}$ whose $\ell_1$ distance to $p_0$ is close to the
$\ell_1$ distance of the distribution in $\cP_{\deltacover}$ that is
closest to $p_0$.
\end{enumerate}
These two steps output a distribution whose $\ell_1$
distance from $p_0$ is \emph{close} to $\deltacover$.
%The precise details are given in Section~\ref{sec:densityestimation}.
Surprisingly, for several common distribution classes, such as Gaussian
mixtures, the number of samples required by this generic approach
matches the information theoretically optimal sample complexity,
up to logarithmic factors~\cite{DaskalakisK13, AcharyaJOS14, diako16}.

The Scheffe Algorithm~\cite{Scheffe47, DevroyeL01} is a popular method
for implementing the second step, namely to find a distribution in
$\cP_{\deltacover}$ with a small $\ell_1$ distance from $p_0$.  It
takes every pair of distributions in $\cP_\deltacover$ and uses the
samples from $p_0$ to decide which of the two distributions is closer
to $p_0$.  It then declares the distribution that ``wins'' the most
pairwise closeness comparisons to be the nearly-closest
to $p_0$.  As shown in~\cite{DevroyeL01}, with high probability, the
Scheffe algorithm yields a distribution that is at most 9 times
further from $p_0$ than the distribution in $\cP_\deltacover$
  with the lowest $\ell_1$ distance from $p_0$, plus a diminishing
  additive term; hence finds a distribution that is roughly $9\delta$
  away from $p_0$. Since this algorithm compares every pair of
  distributions in $\cP_\deltacover$, it uses quadratic in
  $|\cP_\deltacover|$ comparisons.  In Section~\ref{sec:densityestimation}, we
use maximum-selection results to derive an algorithm with the same
approximation guarantee but with linear in
$|\cP_\deltacover|$ comparisons.

\subsection{Organization}
The paper is organized as follows.  In
Section~\ref{sec:problemstatement} we define the problem and introduce
the notations, in Section~\ref{sec:results} we summarize the results,
  in Section~\ref{sec:simple-results} we derive simple bounds and
  describe the performance of simple algorithms, and in
  Section~\ref{sec:algorithm} we present our main maximum-selection
  algorithms.  The relation between density estimation problem and our
  comparison model is discussed in
  Section~\ref{sec:densityestimation}, and in
  Section~\ref{sec:sorting} we discuss sorting with adversarial
  comparators.

\section{Notations and Preliminaries}
\label{sec:problemstatement}
Practical applications call for sorting or selecting the maximum of
not just numbers, but rather of items with associated
values. For example, finding the person with the highest
salary, the product with lowest price, or a sports team with
  the most \emph{capability} of winning. 
Associate with each item $i$ a real value
  $x_i$, and let $\cX\df\{x_1,\ldots,x_n\}$ be the multiset of
values.  In maximum selection, we use noisy pairwise comparisons to
find an index $i$ such that $x_i$ is close to the largest element
$\maxx\df\max\{x_1,\ldots,x_n\}$.

Formally, a faulty comparator $\cC$ takes two distinct indices $i$ and $j$, and
if $|x_i-x_j|>\thrr$, outputs the index associated with the higher value,
while if $|x_i-x_j|\le\thrr$, outputs either $i$ or $j$, possibly
adversarially. 
Without loss of generality, we assume that $\thrr=1$. Then, 
%\begin{equation}
%\label{eq:comp}
\[
\cC(i,j)= \left\{
\begin{array}{ll}
\arg\max\sets{x_i,x_j} & \text{if\qquad} |x_i-x_j|>1,\\
i \text{ or } j \text{ (adversarially)} & \text{if\qquad} |x_i-x_j|\le1.
\end{array} \right.
%\end{equation}
\]
It is easier to think just of the numbers, rather than the indices.
Therefore, informally we will simply view 
the comparators as taking two real inputs $x_i$ and $x_j$, and outputting
%if $|x-y|>\thrr$, output $\max\{x,y\}$, while
%if $|x-y|\le\thrr$, they output either $x$ or $y$, possibly
%adversarially. 
%Without loss of generality, we assume that $\thrr=1$, hence\footnotemark, 
\begin{equation}
\label{eq:comp}
\cC(x_i,x_j)= \left\{
\begin{array}{ll}
\max \{x_i,x_j\} & \text{if\qquad} |x_i-x_j|>1,\\
x_i \text{ or } x_j \text{ (adversarially)} & \text{if\qquad} |x_i-x_j| \le 1.
\end{array} \right.
\end{equation}

We consider two types of adversarial comparators: \emph{non-adaptive}
and \emph{adaptive}.
\begin{itemize}
\item 
A \emph{non-adaptive adversarial comparator} that has complete knowledge of $\cX$ and the algorithm, but must fix its outputs for every pair of inputs before the algorithm starts. 
\item 
An \emph{adaptive adversarial comparator} not only has access to the algorithm and the inputs, but is also allowed to adaptively decide the outcomes of the queries taking into account all the previous queries made by the algorithm.
\end{itemize}

  A non-adaptive comparator can be naturally represented by a
  directed graph with $n$ nodes representing the $n$ indices. There is
  an edge from node $i$ to node $j$ if the comparator declares $x_i$
  to be larger than $x_j$, namely, $\cC(x_i,x_j)=x_i$.
  Figure~\ref{fig:comparison_4} is an example of such a
  comparator, where for simplicity we show only the values 0, 1, 1, 2,
  and not the indices.  Note that by definition, $\cC(2,0)=2$, but for
  all the other pairs, the outputs can be decided by the comparator. In
this example, the comparator declares the node with value 2 as the
``winner'' against the right node with value 1, but as the ``loser''
against the left node, also with value 1. Among the two nodes
  with value 1, it arbitrarily declares the left one as the winner.
An adaptive adversary reveals the edges one by one as the algorithm
proceeds.
%A non-adaptive comparator decides on these outcomes prior to running the algorithm.
\begin{figure}[h]
\begin{center}
\begin{tikzpicture}[scale=1.0]
 \footnotesize
 \node (x0) at (90:1){$2$};
 \node (x1) at (0:1){$1$};
 \node (x2) at (270:1){$0$};
 \node (x3) at (180:1) {$1$};
 \draw [arrows={-angle 60}](x0) to (x2);
 \draw [arrows={-angle 60}](x0) to (x1);
 \draw [arrows={-angle 60}](x3) to (x0);
 \draw [arrows={-angle 60}](x3) to (x1);
 \draw [arrows={-angle 60}](x3) to (x2);
 \draw [arrows={-angle 60}](x2) to (x1);
 \normalsize
\end{tikzpicture}
\end{center}
\caption{Comparator for four inputs with values $\sets{0,1,1,2}$}
\label{fig:comparison_4}
\end{figure}
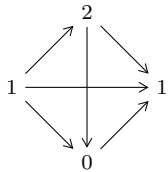

We refer to each comparison as a \emph{query}.  The number of queries
an algorithm $\cA$ makes for $\cX=\{x_1,\ldots, x_n\}$ is its
\emph{query complexity}, denoted by $Q_n^{\cA}$.\footnote{This
      is a slight abuse of notation suppressing $\cX$.} Our algorithms
  are randomized and $Q_n^{\cA}$ is a random variable. The
  \emph{expected query complexity} of $\cA$ for the input $\cX$ is
\[
q_n^{\cA} \df \EE[Q_n^{\cA}],
\]
where the expectation is over the randomness of the algorithm.

Let $\nonadaptivecomp(\cX)$ or simply $\nonadaptivecomp$  be the
set of all
non-adaptive adversarial comparators, and let $\adaptivecomp$ be
the set of all adaptive adversarial comparators.
The \emph{maximum expected query complexity} of $\cA$
against non-adaptive adversarial comparators is  
\begin{align}
\label{eqn:nonadaptivemodel}
 q_n^{\cA,\nonadaptive}\df\max_{\cC\in\nonadaptivecomp}\max_{\cX} q_n^{\cA}.
\end{align}
Similarly, the maximum expected
query complexity of $\cA$ against adaptive adversarial comparators is
\begin{align*}
 q_n^{\cA,\adaptive}\df\max_{\cC\in\adaptivecomp}\max_{\cX} q_n^{\cA}.
\end{align*}

We evaluate an algorithm by how close its output is to $\maxx$, the
maximum of $\cX$.  
\begin{definition} A number $x$ is a
    \emph{$t$-approximation} of $\maxx$ if $x\ge\maxx-t$.
\end{definition}

\noindent The \emph{$t$-approximation error} of an algorithm $\cA$ over $n$ inputs is
\[
\pe_n^\cA(t)\df \pr\left(Y_\cA(\cX) <\maxx- t\right),
\]
the probability that $\cA$'s output $Y_\cA(\cX)$ is \emph{not} a 
$t$-approximation of $\maxx$.
For an algorithm~$\cA$, the maximum $t$-approximation error for the
worst non-adaptive adversary  is
\[
 \pe_n^{\cA,\nonadaptive}(t)\df \max_{\cC\in\nonadaptivecomp} \max_{\cX} \pe_n^{\cA}(t),
\]
and similarly for the adaptive adversary,
\[
 \pe_n^{\cA,\adaptive}(t)\df  \max_{\cC\in\adaptivecomp} \max_{\cX} \pe_n^{\cA}(t).
\]
For the non-adaptive adversary, the minimum $t$-approximation error of any algorithm is 
\[
 \pe_n^{\nonadaptive}(t)\df \min_{\cA} \pe_n^{\cA,\nonadaptive}(t).
\]
and similarly for the adaptive adversary,
\[
 \pe_n^{\adaptive}(t)\df \min_{\cA} \pe_n^{\cA,\adaptive}(t).
\]
Since adaptive adversarial comparators are stronger than non-adaptive, for all $t$,
$$\pe_n^{\adaptive}(t)\ge\pe_n^{\nonadaptive}(t).$$

The next example shows that $\pe_3^{\nonadaptive}(t)\ge\frac13$ for all $t<2$.
\begin{example}
\label{exm:one}
$\pe_3^{\nonadaptive}(t)\ge\frac13$ for all $t<2$.
Consider $\cX=\{0,1,2\}$ and the following comparators. % in Figure~\ref{fig:triangle}.
%\begin{figure}[h]
\begin{center}
\begin{tikzpicture}[scale=1.0]
 \footnotesize
 \node (x0) at (90:1){$0$};
 \node (x1) at (-30:1) {$1$};
 \node (x2) at (210:1){$2$};
 \draw [arrows={-angle 60}](x0) to (x1);
 \draw [arrows={-angle 60}](x1) to (x2);
 \draw [arrows={-angle 60}](x2) to (x0);
 \normalsize
\end{tikzpicture}
\end{center}
% \caption{Tournament without 2-approximation}
% \label{fig:triangle}
%\end{figure}

\ignore{Since extra information can only help the algorithm,
consider an algorithm that has the result of all the $3$
pairwise comparisons. This algorithm knows that there exists three
numbers $x_1,x_2,x_3$ such that each one is bigger than one of the
remaining two numbers and smaller than the other one. }
By symmetry, no algorithm can differentiate between the three inputs,
hence any algorithm will output 0 with probability $1/3$.

%In Lemma~\ref{lem:twoerror} we show that for all $t<2$, no algorithm can obtain a $t$-approximation with probability greater than $1/2-1/(2n)$. 
\end{example}
%We are interested in designing algorithms that have small query complexity, and small $t$-approximation error for small values of $t$. 

\section{Previous and new results}
\label{sec:results}
In Section~\ref{subsec:lowerbound} we lower bound $\pe_n^{\nonadaptive}(t)$ as a function of $t$.
In Lemma~\ref{lem:oneerror}, we show that for all $t<1$ and odd $n$, $\pe_n^{\nonadaptive}(t)\ge 1-1/n$, namely for some $\cX$,
approximating the maximum to within less than one is equivalent to 
guessing a random $x_i$ as the maximum.
In Lemma~\ref{lem:twoerror}, we modify 
Example~\ref{exm:one} and show that for all $t<2$ and odd $n$, any 
algorithm has $t$-approximation error close
to $1/2$ for some input.  

We propose a number of algorithms to approximate the
maximum. These algorithms have different  guarantees in
terms of the probability of error, approximation factor, and query
complexity.  

We first consider two simple algorithms, the complete tournament,
denoted $\algcomp$, and the sequential selection,
denoted $\algseq$. Algorithm $\algcomp$ compares
all the possible input pairs, and declares the input with the
most number of wins as the maximum.  We show the simple result
that almost surely $\algcomp$ outputs a $2$-approximation of
$\maxx$.  We then consider the algorithm $\algseq$ that
compares a pair of inputs and discards the loser and compares the
winner with a new input. We show that even under random selection of
the inputs, there exist inputs such that, with high probability,
$\algseq$ cannot provide a constant approximation to $\maxx$.

  We then consider more advanced algorithms.
  The knock-out algorithm, at each stage pairs the
  inputs at random, and keeps the winners of the comparisons for the
  next stage. We design a slight modification of this algorithm,
  denoted $\algkomod$ that achieves a $3$-approximation with error
  probability at most $\errorcomb$, even against adaptive adversarial
  comparators.  We note that~\cite{AjtaiFHN15} proposed a different
algorithm with similar performance guarantees.

Motivated by quick-sort, we propose a quick-select algorithm
$\algquickselect$ that outputs a $2$-approximation, with zero
  error probability. It has an expected query complexity of at most
  $2n$ against the non-adaptive adversary. However, in
  Example~\ref{exm:expected_quick_select}, we see that this algorithm
  requires $\binom n2$ queries against the adaptive adversary.

This leaves the question of whether there is a randomized algorithm
for $2$-approximation of $\maxx$ with $\cO(n)$ queries against the
adaptive adversary. In fact,~\cite{AjtaiFHN15} pose this as an open
question. We resolve this problem by designing an algorithm $\algcomb$
that combines quick-select and knock-out. We prove that $\algcomb$
outputs a $2$-approximation with probability of error at most
$\errorcomb$, using $\cO(n\log\frac{1}{\errorcomb})$ queries. We
summarize the results in Table~\ref{tab:results}.
\begin{table}[ht]
 \centering
 \begin{tabular}{| L{4.0cm}| c | c | c| c |}
    \hline
    algorithm& notation& approximation & $q_n^{\cA,\nonadaptive}$ & $q_n^{\cA,\adaptive}$  \\ \hline
    complete tournament & $\algcomp$& $\pe_n^{\algcomp,\adaptive}(2)=0$ & \multicolumn{2}{c|}{$\binom n2$}  \\ \hline
    deterministic upper bound~\cite{AjtaiFHN15}& - & $\pe_n^{\cA,\adaptive}(2)=0$& \multicolumn{2}{c|}{$\Theta(n^{\frac32})$}  \\ \hline
    deterministic lower bound~\cite{AjtaiFHN15}& -& $\pe_n^{\cA,\adaptive}(2)=0$  & - & $\Omega(n^{\frac43})$ \\ \hline
    sequential& $\algseq$ & $\pe_n^{\algseq,\nonadaptive}\left(\tfrac{\log n}{\log \log n}-1\right)\to 1$& \multicolumn{2}{c|}{$n-1$} \\ \hline
    modified knock-out  & $\algkomod$& $\pe_n^{\algkomod,\adaptive}(3) < \epsilon$ &\multicolumn{2}{c|}{$< n+\frac12 \log^4 n \left\lceil\frac1\errorcomb\ln \frac1\errorcomb \right\rceil^2$} \\ \hline
    quick-select & $\algquickselect$ & $\pe_n^{\algquickselect,\adaptive}(2)=0$ & $<2n$ & $\binom n2$ \\ \hline
    knock-out and quick-select combination & $\algcomb$ & $\pe_n^{\algcomb,\adaptive}(2)<\errorcomb$  & \multicolumn{2}{c|}{$\cO\left(n\log \tfrac1\errorcomb\right)$}\\ \hline
  \end{tabular}
  \caption{Maximum selection algorithms}
  \label{tab:results}
\end{table}

We note that while we focus on randomized
algorithms,~\cite{AjtaiFHN15} also studied the best possible
trade-offs for deterministic algorithms.  They designed a
deterministic algorithm for $2$-approximation of the maximum using
only $\cO(n^{3/2})$ queries. Moreover, they prove that no
deterministic algorithm with fewer than $\Omega(n^{4/3})$ queries can
output a $2$-approximation of $\maxx$ for the adaptive adversarial
model.

\section{Simple results}
\label{sec:simple-results}
In Lemmas~\ref{lem:oneerror} and~\ref{lem:twoerror} we prove lower
bounds on the error probability of any algorithm that provides a
$t$-approximation of $\maxx$ for $t<1$ and $t<2$ respectively.  We
then consider two straightforward algorithms
for finding the maximum. One is the complete tournament, where each
pair of inputs is compared, and the other is sequential, where inputs are compared sequentially and the loser is
discarded at each comparison.

\subsection{Lower bounds}
\label{subsec:lowerbound}
We show the following two lower bounds.
\begin{itemize}
\item $\pe_n^{\nonadaptive}(t)\ge 1-\frac1{n}$ for all $0\le t<1$ and odd $n$. 
\item $\pe_n^{\nonadaptive}(t)\ge \frac12-\frac1{2n}$ for all $1\le t<2$ and odd $n$.
\end{itemize}
These lower bounds can be applied to $n$ which is even, by adding an
extra input smaller than all the others and losing to everyone.
\begin{lemma}
 \label{lem:oneerror}
For all $0 \le t<1$ and odd $n$,
 $$\pe_n^{\nonadaptive}(t)\ge 1-\frac1n.$$
\end{lemma}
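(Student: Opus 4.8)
The plan is to exhibit, for each odd $n$, a single input multiset $\cX$ together with a non-adaptive adversarial comparator under which \emph{every} query returns an adversarially chosen answer, so that the algorithm receives no information distinguishing the maximum from the other inputs. Concretely, I would take $\cX=\{0,1,2,\ldots,n-1\}$, so that consecutive values differ by exactly $1$ and hence any two values $x_i,x_j$ with $|i-j|=1$ satisfy $|x_i-x_j|=1\le\thrr$, making the comparator free to choose. More importantly, the adversary is free on every pair whose values differ by at most $1$; to get full freedom on \emph{all} pairs we instead use the cyclic structure of Example~\ref{exm:one} generalized to $n$ inputs. The cleanest choice is to let the comparator behave like a ``rotational tournament'': order the indices around a cycle and declare $\cC(x_i,x_j)$ according to a fixed cyclic rule, checking that this is consistent with~\eqref{eq:comp} whenever the gap exceeds $1$.

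The key steps, in order, are: (1) choose the input values and the comparator so that the directed comparison graph is vertex-transitive — i.e., there is an automorphism group acting transitively on the $n$ indices that preserves all comparator outcomes; (2) observe that since the algorithm is randomized but sees only the sequence of comparator outputs, its output index (as a random variable) is equivariant under this automorphism group in distribution, so by symmetry each index $i$ is output with probability exactly $1/n$; (3) conclude that the unique index carrying the value $\maxx=n-1$ (or whichever index the adversary has designated as the true maximum) is output with probability $1/n$, hence the algorithm fails — outputs a value $<\maxx-t$ for any $t<1$, since the second-largest value is $n-2=\maxx-1<\maxx-t$ — with probability $1-1/n$. The only subtlety in step (1) is arranging that the adversary's designated maximum really is larger than all other values yet the comparator graph is still symmetric; this forces using a multiset where all values are ``close'' in the relevant pairwise sense, or using the rotational construction where the role of ``maximum'' can be assigned to any vertex by the adversary precisely because the algorithm cannot tell the vertices apart.

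The main obstacle is making the symmetry argument in step (2) fully rigorous for \emph{adaptive} algorithms against a \emph{non-adaptive} adversary: the algorithm's queries depend on past outputs, so one must argue that relabeling the indices by an automorphism commutes with the algorithm's entire (possibly adaptive, randomized) decision tree. The standard way to handle this is to fix the algorithm's internal randomness, run it, and note that applying the automorphism $\sigma$ to all index labels yields a valid execution of the same algorithm (since the comparator outputs are permuted consistently by $\sigma$) whose final output is $\sigma$ applied to the original output; averaging over the uniform choice of $\sigma$ in the transitive automorphism group then gives the $1/n$ bound. I would therefore devote the bulk of the proof to specifying the comparator so that its automorphism group acts transitively, and to the coupling argument showing output-equivariance; the probability calculation at the end is immediate.
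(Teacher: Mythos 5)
Your overall strategy---an input multiset on which the adversary is free on every pair, plus an indistinguishability/averaging argument---is the same family of argument as the paper's, but two steps as written do not go through. First, the multiset: you propose $\cX=\{0,1,\ldots,n-1\}$, and even after flagging the difficulty your step (3) still computes with $\maxx=n-1$ and second-largest $n-2$. This multiset fails: the comparator is \emph{forced} to answer correctly on every pair differing by more than $1$ (e.g.\ $\cC(n-1,0)=n-1$ is mandatory), so the comparison graph cannot be made vertex-transitive and the algorithm can in fact locate a near-maximum from the forced edges. To have full adversarial freedom on all pairs while keeping every non-maximal value at least $t$ below $\maxx$ for every $t<1$, you are essentially forced to take one input equal to $1$ and the remaining $n-1$ equal to $0$, which is exactly the paper's choice (with a regular tournament as the comparator, possible since $n$ is odd).

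Second, the symmetry argument. For a \emph{fixed} assignment of values to indices, the claim in your step (2) that ``each index is output with probability exactly $1/n$'' is false: the algorithm that outputs index $1$ without querying does so with probability $1$, and vertex-transitivity of the comparator graph does nothing to prevent this---the algorithm, not the graph, breaks the symmetry. Your coupling also fails as stated: applying an automorphism $\sigma$ to all index labels of a transcript does not produce a valid execution of the same algorithm, because with its randomness fixed the algorithm's first query is a specific pair of labels, not its $\sigma$-image. The repair---and the paper's actual proof---is Yao's principle with the randomization placed on the \emph{input} rather than on the output: fix the tournament, observe that the algorithm's output index depends only on its internal randomness and the tournament (the comparator never consults the values here), place the value $1$ at a uniformly random index independent of everything else, and conclude the output index carries the $1$ with probability exactly $1/n$; hence some fixed permutation forces error at least $1-\frac1n$. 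Once the averaging is over the placement of the maximum, no graph automorphisms are needed at all.
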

\begin{proof}
% Consider $0\le \epslower_1<\epslower_2<\ldots<\epslower_{n-1}<1-t$. 
 Let $(x_1,x_2,\ldots,x_n)$ be an unknown permutation of
 $(1,\underbrace{0,\ldots,0}_{n-1})$. Suppose we consider an
 adversary that ensures each input wins exactly $(n-1)/2$ times. An
 example is shown in Figure~\ref{fig:tournament_for_error_one} for
 $n=5$. 
\begin{figure}[h]
\begin{center}
\begin{tikzpicture}[scale=1.5]
 \footnotesize
 \node (x1) at (162:1){$1$};
 \node (x2) at (90:1){$0$};
 \node (x3) at (18:1){$0$};
 \node (x4) at (-54:1){$0$};
 \node (x5) at (-126:1){$0$};
 \draw [arrows={-angle 60}] (x1) to (x2);
 \draw [arrows={-angle 60}](x1) to (x3);
 \draw [arrows={-angle 60}](x2) to (x3);
 \draw [arrows={-angle 60}](x2) to (x4);
 \draw [arrows={-angle 60}](x3) to (x4);
 \draw [arrows={-angle 60}](x3) to (x5);
 \draw [arrows={-angle 60}](x4) to (x5);
 \draw [arrows={-angle 60}](x4) to (x1);
 \draw [arrows={-angle 60}](x5) to (x1);
 \draw [arrows={-angle 60}](x5) to (x2);
 \normalsize
\end{tikzpicture}
\end{center}
 \caption{Tournament for Lemma~\ref{lem:oneerror} when $n=5$}
 \label{fig:tournament_for_error_one}
\end{figure}
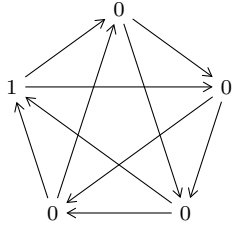

We want a lower bound on the performance of any randomized
algorithm. By Yao's principle, we consider only deterministic
algorithms over a uniformly chosen permutation of the inputs, namely
only one of the coordinates is 1, and remaining are less than $1-t$. In this case,
if we fix any comparison graph (as in the figure above), and permute
the inputs, the algorithm cannot distinguish between 1 and $0$'s,
and outputs $0$ with probability $1-1/n$.
\end{proof}
\begin{lemma}
 \label{lem:twoerror}
For all $1 \le t<2$ and odd $n$,
$$\pe_n^{\nonadaptive}(t)\ge \frac12-\frac1{2n}.$$ 
\end{lemma}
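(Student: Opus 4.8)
The plan is to lift Example~\ref{exm:one} from three elements to all odd $n$ by replacing the directed triangle with a rotationally symmetric tournament and hiding the maximum under a random rotation.

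Write $n=2k+1$ and identify the $n$ indices with $\mathbb{Z}_n$. Take as comparator $\cC_G$, the comparator whose win-relation is the circulant tournament $G$ on $\mathbb{Z}_n$ in which index $i$ beats index $j$ exactly when $j-i \bmod n\in\{1,\ldots,k\}$; this $G$ is invariant under the cyclic rotation $\rho\colon i\mapsto i+1$. Define a reference value assignment $\phi_0$ by putting the value $2$ on index $0$, the value $0$ on the $k$ out-neighbours $\{1,\ldots,k\}$ of index $0$, and the value $1$ on the remaining $k$ indices $\{k+1,\ldots,2k\}$. Then $\maxx=2$, there are exactly $k$ indices of value $0$, and --- the one structural fact that makes everything work --- the unique value-$2$ index beats every value-$0$ index in $G$. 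Hence the only pairs more than $1$ apart (a $2$ against a $0$) are resolved correctly by $\cC_G$, so $\cC_G$ is a legal non-adaptive comparator for $\phi_0$; and since $\rho\in\operatorname{Aut}(G)$, it is equally legal for every rotated assignment $\phi_r:=\phi_0\circ\rho^{-r}$, $r\in\mathbb{Z}_n$.

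Now I would run the Yao-principle argument exactly as in Lemma~\ref{lem:oneerror}: fix the comparator $\cC_G$, draw $r$ uniformly from $\mathbb{Z}_n$, feed the algorithm the input $\phi_r$, and lower bound the error of an arbitrary deterministic algorithm in this experiment. Because $\cC_G$ does not depend on $r$, every query returns the same answer for all $r$, so the algorithm's whole run --- and in particular the index $v^*$ it outputs --- is independent of $r$. For $1\le t<2$ the algorithm errs iff $\phi_r(v^*)=0$, i.e.\ iff $v^*\in\rho^r(\{1,\ldots,k\})$, which holds for exactly $k$ of the $n$ choices of $r$; hence the error probability is $k/n=\tfrac12-\tfrac1{2n}$, giving $\pe_n^{\nonadaptive}(t)\ge\tfrac12-\tfrac1{2n}$. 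For $n=3$ this recovers Example~\ref{exm:one} verbatim.

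The arithmetic is trivial; the only real content is the design of the symmetric instance, and the only thing that needs care is that $\cC_G$ be simultaneously valid for all $n$ rotations --- which collapses to the single claim that the value-$2$ index beats all $k$ value-$0$ indices, true by construction because the zeros sit precisely on its out-neighbourhood. The conceptual point, inherited from Example~\ref{exm:one}, is that a rotation-invariant transcript leaves a deterministic algorithm unable to do better than blind guessing over a symmetric set of $k$ ``bad'' indices out of $n$. The main obstacle is purely one of construction (producing a vertex-transitive tournament together with a compatible one-$2$/$k$-zeros/$k$-ones colouring for every odd $n$), and the circulant family supplies it cleanly.
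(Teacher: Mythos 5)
Your proof is correct and follows essentially the same route as the paper: the paper also places a single $2$, $k$ ones, and $k$ zeros on a regular tournament in which the $2$ beats exactly the zeros and loses to the ones (its Figure for $n=5$ is precisely your circulant), and then invokes the same Yao-style symmetry argument to get error $\tfrac{k}{n}=\tfrac12-\tfrac1{2n}$. Your write-up merely makes explicit the vertex-transitivity (rotation invariance) that the paper's ``the inputs are all identical to the algorithm'' step tacitly relies on.
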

\begin{proof}
Let $m$ be $(n-1)/2$. Let $(x_1,x_2,\ldots,x_n)$ be an unknown permutation of
$(2,\underbrace{1,\ldots,1}_{m},\underbrace{0,\ldots,0}_{m})$.
Suppose the adversary ensures that 2 loses
against all the $1$'s and indeed all inputs have exactly $(n-1)/2$
wins. An example is shown in Figure~\ref{fig:tournament_for_error_two}. 
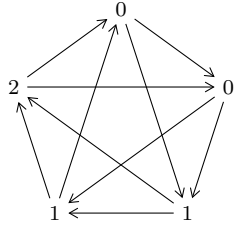
\begin{figure}[h]
\begin{center}
\begin{tikzpicture}[scale=1.5]
 \footnotesize
 \node (x1) at (162:1){$2$};
 \node (x2) at (90:1){$0$};
 \node (x3) at (18:1){$0$};
 \node (x4) at (-54:1){$1$};
 \node (x5) at (-126:1){$1$};
 \draw [arrows={-angle 60}](x1) to (x2);
 \draw [arrows={-angle 60}](x1) to (x3);
 \draw [arrows={-angle 60}](x2) to (x3);
 \draw [arrows={-angle 60}](x2) to (x4);
 \draw [arrows={-angle 60}](x3) to (x4);
 \draw [arrows={-angle 60}](x3) to (x5);
 \draw [arrows={-angle 60}](x4) to (x5);
 \draw [arrows={-angle 60}](x4) to (x1);
 \draw [arrows={-angle 60}](x5) to (x1);
 \draw [arrows={-angle 60}](x5) to (x2);
 \normalsize
\end{tikzpicture}
\end{center}
 \caption{Tournament for Lemma~\ref{lem:twoerror} when $n=5$}
 \label{fig:tournament_for_error_two}
\end{figure}

Similar to Lemma~\ref{lem:oneerror}, the inputs are all identical to the algorithm and therefore, 
the algorithm outputs one of the $0$'s with probability $\frac{m}{n}=\frac12-\frac1{2n}$. 
\end{proof}
\subsection{Two elementary algorithms}
\label{sec:elementary}
In this section, we analyze two familiar maximum selection algorithms,
the complete tournament and sequential selection. We discuss their
strengths and weaknesses, and show that there is a trade-off between
the query complexity and the approximation guarantees
of these two algorithms. Another well-known algorithm for maximum
selection is knock-out algorithm and we discuss a variant of
it in Section~\ref{subsec:knockout}.
\subsubsection{Complete tournament (round-robin)}
\label{subsec:complete}
As its name evinces, a complete tournament involves a match between
every pair of teams. Using this metaphor to competitions, we
compare all the  $n \choose 2$ input pairs, and the input winning
the maximum number of times is declared as the output. If two or more inputs end up with the highest number of 
wins, any of them can be declared as the output. This algorithm is
formally stated in $\algcomp$. %and was formerly studied in the context of density estimation in~\cite{DevroyeL01}.  

\begin{algorithm}
%\TitleOfAlgo{$\algcomp$ - Complete tournament}
\caption{$\algcomp$ - Complete tournament}
\textbf{input:} $\cX$\\
\qquad compare all input pairs in $\cX$, count the number of times each input wins\\
\textbf{output:} an input with the maximum number of wins 
\end{algorithm}

The next lemma shows that this algorithm gives a 2-approximation
against both adversaries. The result, although weaker than the
deterministic guarantees of~\cite{AjtaiFHN09}, is illustrative and
useful in the algorithms proposed later. 
\begin{lemma}
 \label{lem:complete-tournament}
$q_n^{\algcomp,\adaptive}=\binom{n}{2}$ and $\pe_n^{\algcomp,\adaptive}(2)=0$. 
\end{lemma}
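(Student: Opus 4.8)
The plan is to establish the two claims separately. The query-complexity equality $q_n^{\algcomp,\adaptive}=\binom{n}{2}$ is immediate: the algorithm $\algcomp$ deterministically compares every unordered pair of inputs exactly once, so for every adversary (adaptive or not) and every input multiset $\cX$ it makes exactly $\binom n2$ queries; hence the expected value, the worst-case value, and the worst-over-adversaries value all coincide at $\binom n2$. No randomness is involved, so there is nothing further to check here.

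The substantive part is $\pe_n^{\algcomp,\adaptive}(2)=0$, i.e.\ the output $y$ always satisfies $y\ge\maxx-2$, no matter how the adversary (even an adaptive one) answers. First I would fix an arbitrary input multiset $\cX$ and an arbitrary element $x^*$ attaining the maximum, and introduce the ``close set'' $S=\{i : x_i > \maxx - 2\}$ of all inputs within distance less than $2$ of the maximum; it suffices to show that the declared winner lies in $S$, since any $x_i$ with $i\in S$ is by definition a $2$-approximation of $\maxx$. The key combinatorial observation is a bound on the number of wins that an element \emph{outside} $S$ can accumulate versus the number that an element in $S$ — in particular the maximum itself — is guaranteed to accumulate.

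Here is the counting argument I would carry out. Take any index $j\notin S$, so $x_j\le\maxx-2$. For every index $i$ with $x_i>\maxx-1$ we have $x_i-x_j>1$, so by the definition of the comparator in~\eqref{eq:comp} the comparison $\cC(x_i,x_j)$ is forced to output $x_i$; thus $j$ loses all its comparisons against the (at least one) elements exceeding $\maxx-1$, and more importantly $j$ can win at most $n-1-w$ comparisons where $w$ is the number of indices $i\ne j$ with $x_i>\maxx-1$. Meanwhile, consider the maximum $x^*$ itself: against any index $i$ with $x_i\le\maxx-1$ we have $\maxx-x_i\ge 1$, but this is not strictly greater than $1$ in the boundary case, so I must be slightly careful — the cleaner route is to compare the \emph{win counts directly}. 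Let $a$ be the number of indices with value in $(\maxx-1,\maxx]$ (the ``high'' group, which contains $x^*$) and note every high element beats every element of value $\le\maxx-2$, i.e.\ every element outside $S$ loses to every high element. If the winner $y$ had $y\notin S$, then $y$ won at most $(n-1)-a$ games (it lost to all $a$ high elements), whereas $x^*$ — being high — beat all $n-a$ non-high elements and hence won at least $n-a$ games; since $n-a>(n-1)-a$, the element $x^*$ strictly outscored $y$, contradicting that $y$ has the maximum number of wins. Therefore $y\in S$, i.e.\ $y>\maxx-2\ge\maxx-2$, so the output is always a $2$-approximation and $\pe_n^{\algcomp,\adaptive}(2)=0$.

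The main obstacle is the boundary case $|x_i-x_j|=1$, where the comparator is adversarial rather than forced: one must make sure the win-count comparison only ever relies on \emph{strict} separation ($>1$) between the groups being compared, which is why I partition values into the ``high'' band $(\maxx-1,\maxx]$ and the complement of $S$ (values $\le\maxx-2$) — these two bands are separated by more than $1$, so all cross-comparisons are forced, while comparisons \emph{within} a band or in the intermediate band $(\maxx-2,\maxx-1]$ can be handed entirely to the adversary without affecting the inequality $n-a>(n-1)-a$. Since this argument never inspects the adversary's choices on unforced pairs, it applies verbatim to the adaptive adversary, giving the stated bound against $\adaptivecomp$.
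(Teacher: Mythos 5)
Your treatment of the query complexity is fine, but the approximation argument has a genuine gap at exactly the boundary you flagged and then did not actually resolve. The step ``$x^*$ --- being high --- beat all $n-a$ non-high elements'' is false: a non-high element may have value exactly $\maxx-1$, and then $|\maxx-(\maxx-1)|=1\le 1$, so by \eqref{eq:comp} that comparison is adversarial and $\maxx$ may lose it. Your two bands (high, i.e.\ value in $(\maxx-1,\maxx]$, and outside $S$, i.e.\ value $\le\maxx-2$) are indeed separated by more than $1$, but the count ``$x^*$ wins at least $n-a$'' also charges $x^*$ with wins against the \emph{middle} band $(\maxx-2,\maxx-1]$, and the comparison of $x^*$ against an element of value exactly $\maxx-1$ is not forced. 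The input $\cX=\{0,1,2\}$ with the cyclic adversary of Example~\ref{exm:one} is a concrete counterexample to both the step and the conclusion you draw from it: there $a=1$, your bound promises $x^*=2$ at least $n-a=2$ wins, but it gets only one, and the element $0\notin S$ ties for the most wins and can be declared the winner. So the intermediate claim ``the declared winner lies in $S$'' (with $S$ defined by the strict inequality $x_i>\maxx-2$) is simply not true; you are trying to prove something stronger than the lemma, and that stronger statement fails.

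The lemma only requires ruling out winners with $y<\maxx-2$ \emph{strictly}, and for those the clean argument is a win-set containment rather than a global band count: if $y<\maxx-2$, every $z$ that $y$ beats satisfies $z\le y+1<\maxx-1$, hence $\maxx-z>1$ and $\maxx$ is forced to beat $z$ as well; moreover $\maxx-y>2>1$, so $\maxx$ is forced to beat $y$ itself. Thus $\maxx$ wins strictly more comparisons than $y$, and $y$ cannot have the maximum number of wins. This is the paper's proof; it never needs to compare $\maxx$ against the middle band at all, which is what makes it immune to the $|x_i-x_j|=1$ boundary case, and it applies verbatim to the adaptive adversary since it relies only on forced comparisons.
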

\begin{proof}
The number of queries is clearly ${n\choose 2}$.  To show that
$\pe_n^{\algcomp,\adaptive}(2)=0$, note that if $y<\maxx-2$ then
for all $z$ that $y$ wins over, $z\le y+1 <\maxx-1$, and
  therefore $\maxx$ also beats them. Since $\maxx$ wins over $y$, it
  wins over more inputs than $y$ and hence $y$ cannot be the output of
  the algorithm. It follows that the input with the maximum
number of wins is a $2$-approximation of $\maxx$.
\end{proof}

\jnecess{$\algcomp$ is deterministic and after $\binom n2$ queries
  it outputs a $2$-approximation of $\maxx$.  If the
  comparators are noiseless, we can simply compare the inputs
  sequentially, discarding the loser at each step, thus requiring only
  $n-1$ comparisons.  This evokes the hope of finding a deterministic
  algorithm that requires a linear number of comparisons and outputs a
  $2$-approximation of $\maxx$. As mentioned earlier,
  \cite{AjtaiFHN15} showed it is not achievable.  They proved that any
  deterministic $2$-approximation algorithm requires $\Omega(n^{4/3})$
  queries. They also showed a strictly superlinear lower bound on any
  deterministic constant-approximation algorithm. On the other hand,
  they designed a deterministic $2$-approximation algorithm using
  $\cO(n^{3/2})$ queries.}

\subsubsection{Sequential selection}
\label{subsec:sequential}
Sequential selection first compares a random pair of inputs, and at each
successive step, compares the winner of the last comparison with
a randomly chosen \emph{new} input. It outputs the final remaining input. 
This algorithm uses $n-1$ queries.

\begin{center}
\begin{algorithm}
\caption{$\algseq$ - Sequential selection}
\textbf{input:} $\cX$\\
\qquad choose a random $y\in \cX$ and remove it from $\cX$\\
\qquad\textbf{while} $\cX$ is not empty\\
\qquad\qquad choose a random $x\in \cX$ and remove it from $\cX$\\
\qquad\qquad 	$y\leftarrow\cC(x,y)$\\
\qquad\textbf{end while}\\
\textbf{output:} $y$
\end{algorithm}
\end{center}

The next lemma shows that even against the non-adaptive adversarial
comparators, the algorithm cannot output a constant-approximation of
$\maxx$.

\begin{lemma}
\label{lem:sequential_example}
 Let $s=\frac{\log n}{\log\log n}$. For all $t<s$,
 $$\pe_n^{\algseq,\nonadaptive}(t)\ge 1-\frac{1}{\log \log n}.$$
\end{lemma}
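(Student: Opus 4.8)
The plan is to construct an adversarial instance on which the sequential algorithm, with high probability, ``gets stuck'' far below $\maxx$. I would take $\cX$ to consist of a single large value and a collection of values arranged in $\Theta(s)$ levels: level $0$ has many copies of value $0$, level $1$ has somewhat fewer copies of value $1$, and in general level $k$ has $n_k$ copies of value $k$, with $n_0 > n_1 > \cdots > n_{s-1}$ decaying fast enough (roughly a factor of $\log n$ between consecutive levels) that the $n_k$ sum to about $n$, and a single copy of the true maximum $\maxx = s$ (or anything $\ge s$). The non-adaptive adversary's rule is the natural one: whenever the running champion $y$ has value $k$ and meets a new input of value $k$ or $k+1$ (i.e., within the threshold $1$), the adversary keeps $y$; only a genuinely larger input, differing by more than $1$, can dislodge it, and even then it is replaced by something only one level up.

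The key steps are as follows. First I would argue that the only way the champion's value increases from level $k$ to level $k+1$ (or higher) is by the champion meeting, at some step, an input of value $\ge k+2$ — because inputs of value $k$ or $k+1$ are declared losers by the adversary. So the champion can climb at most one ``effective'' level per such event, and to reach value $> t$ with $t < s$ it must survive a whole chain of encounters with strictly higher inputs. Second, I would track the champion level by level: conditioned on the champion currently having value $k$, the next ``new'' input is a uniformly random one among those not yet seen; the chance that this input has value $\ge k+2$ (the only kind that can promote the champion) is roughly $n_{k+2}/(\text{remaining count}) = O(1/\log n)$ by the geometric decay of the $n_k$, while with the remaining probability the champion stays at level $k$ (if it meets level $k$ or $k+1$) or — more carefully — I should choose the adversary so that meeting any input $<k$ also leaves $y$ unchanged, which is automatic since those differ by more than $1$ and $y$ is larger. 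Thus at each of the $\le n$ steps the champion is promoted with probability $O(1/\log n)$, and by a union bound over the $s - 1$ promotions needed to exceed value $t$, the probability the champion ever exceeds $t$ is at most $(s-1)\cdot O(1/\log n)$; pushing the constants and being slightly more careful with the per-level counts gives the stated $1 - 1/\log\log n$ lower bound on the error. (One must also ensure $\maxx = s$ is actually achieved by \emph{some} input so that $t < s$ is meaningful, and that the total number of inputs is exactly $n$; padding with extra level-$0$ zeros handles the arithmetic.)

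Concretely, for the union bound to yield $1 - 1/\log\log n$ I expect to need: the number of levels to be about $s = \log n / \log\log n$ so that a product/sum of $s$ decay factors of size $\log n$ multiplies out to roughly $n$ (indeed $(\log n)^{s} = n$), and the per-step promotion probability to be at most about $1/\big(s \log\log n\big)$ after accounting for all $\le n$ steps — which forces the decay factor between consecutive levels to be a bit larger than $\log n$, but only by lower-order amounts, so $s$ changes only in the constant. Then $\Pr(\text{champion ends above }t) \le \sum_{k \le t}\Pr(\text{promoted past level }k) \le (s-1)\cdot \tfrac{1}{(s-1)\log\log n} = \tfrac{1}{\log\log n}$, i.e. $\pe_n^{\algseq,\nonadaptive}(t) \ge 1 - 1/\log\log n$.

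The main obstacle I anticipate is the dependence between steps: the ``remaining pool'' of unseen inputs shrinks and its composition changes as the algorithm runs, so the per-step promotion probability is not exactly $n_{k+2}/n$ but a ratio of current counts, and in the worst case (late in the run, when the champion has been stuck at a low level and many low-value inputs are exhausted) this ratio could spike. The fix is to note that the champion can only be promoted by a higher-value input, and there are only $n_{k+2} + n_{k+3} + \cdots$ inputs of value $\ge k+2$ in the \emph{entire} instance; hence the expected number of promotion-events over the whole run is bounded by $\sum_{k} (n_{\ge k+2}/1)$-type quantities independent of ordering, or more cleanly, I can bound $\Pr(\text{champion ever promoted past level }k)$ directly by the event that some level-$(\ge k+2)$ input is encountered \emph{while} $y$ sits at level $k$ — and by choosing the $n_j$ to decay fast, the number of such ``dangerous'' inputs is tiny compared to the number of level-$k$ (and lower) inputs that must be burned through first. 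Making this counting argument rigorous, rather than the naive per-step union bound, is the part that requires care; I would formalize it by conditioning on the random permutation of $\cX$ and arguing that for the champion to be promoted past level $k$, an input of value $\ge k+2$ must appear in the permutation before the champion has been knocked down — which, given the adversary never knocks $y$ down below its current level, reduces to a clean ordering statement about where the few high inputs land relative to the many low ones.
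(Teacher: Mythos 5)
There is a genuine gap, and it lies in your choice of adversary. Your comparator keeps the current champion $y$ whenever the new input is within the threshold, so the champion's value is monotonically non-decreasing along the run. But $\algseq$ compares the champion against \emph{every} input exactly once, so at some step it meets the unique input of value $s=\maxx$. At that moment either the champion already has value $s-1$ (and then keeps it forever, since under your rule it can never be demoted and nothing of value $\ge s+1$ exists), or it has value $\le s-2$ and the comparison is forced to be honest, promoting the champion to $s$, where it again stays forever. Either way the output is at least $s-1$, i.e.\ a $1$-approximation: your adversary makes the algorithm succeed rather than fail. The same issue breaks the probabilistic accounting: $\Pr(\text{champion is ever promoted past level }k)$ is not $O\!\left(\tfrac{1}{(s-1)\log\log n}\right)$ but equals $1$ for every $k\le s-2$, because the top input is guaranteed to be encountered; moreover the union bound would in any case have to range over the $n$ steps rather than over the $s-1$ promotions, and a single promotion can carry the champion up by arbitrarily many levels, not just one.

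The repair is to flip the adversary: on a close comparison it must output the \emph{minimum}, as in \eqref{eqn:adv}. Then the champion is dragged \emph{down} one level each time it meets a new input one below its current value, and the heavy-at-the-bottom level counts (a factor of $\log n$ between consecutive levels) guarantee that, with probability at least $1-s/\log n=1-1/\log\log n$, the last occurrence of value $j$ in the random order comes after the last occurrence of value $j+1$ for every $j$; on that event the champion is ratcheted all the way down and the output is $0$, which is $s$ below $\maxx$. Your instance itself (geometric level sizes, single top value $s=\log n/\log\log n$) is exactly right, and your closing instinct about reducing everything to ``a clean ordering statement'' is correct --- but the statement that works concerns last occurrences under the min-adversary, not the positions of the few high inputs under a champion-preserving adversary.
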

\begin{proof}
Assume that $s$, $\log n$, and $\log\log n$ are integers and
\[
x_i=\left\{
 \begin{array}{ll}
  s & \text{for }i=1,\\
  s-1 &\text{for } i=2,\ldots,r,\\
  s-2 &\text{for } i=r+1,\ldots,r^2,\\
  \vdots\\
  m & \text{for } i=r^{s-m-1}+1,\ldots,r^{s-m},\\
  \vdots\\
  0 & \text{for } i=r^{s-1}+1,\ldots,r^s,
 \end{array}\right.
\]
where $r=\log n$.
Consider the following non-adaptive adversarial comparator,
\begin{align}\label{eqn:adv}
\cC(x_i,x_j) = \left\{
\begin{array}{ll}
\max\{x_i,x_j\} & \text{if } |x_i-x_j|>1,\\
\min\{x_i,x_j\} & \text{if }  |x_i-x_j|\le 1.
\end{array} \right.
\end{align}

The sequential algorithm takes a random permutation of the inputs. It
then starts by comparing the first two elements, and then sequentially
compares the winner with the next element, and so on. Let $L_j$ be the
location in the permutation where input $j$ appears for the last time.
The next two observations follow from the construction of inputs and
comparators respectively.
\begin{obs}
\label{obs:thenumber}
Input $j$ appears at least $(\log n-1)$ times more than input $j+1$.
\end{obs}
\begin{obs}
 \label{obs:forthe}
 For the adversarial comparator defined in~\eqref{eqn:adv}, if
 $L_0>L_1>\ldots>L_s$ then $\algseq$ outputs $0$.
\end{obs}

As a consequence of Observation~\ref{obs:thenumber}, in the random permutation of inputs, $L_j>L_{j+1}$ with probability at least $1-\frac1{\log n}$. By the union bound, $L_0>L_1>\ldots>L_s$ with probability at least,
\[
1-\frac{s}{\log n}=1-\frac1{\log \log n}.
\]
By applying Observation~\ref{obs:forthe}, $\algseq$ outputs $0$ with probability at least $1-\frac1{\log \log n}$.
\end{proof}

\section{Algorithms}
\label{sec:algorithm}
In the previous section we saw that the complete tournament,
$\algcomp$, always outputs a $2$-approximation, but has quadratic query
complexity, while the sequential selection, $\algseq$, has linear
query complexity but a poor approximation guarantee. 
A natural question to ask is whether the benefits of these two
algorithms
can be combined to derive bounded error with linear query complexity.
In this section, we propose algorithms with linear query
complexity
and approximation guarantees that compete with the best possible, \ie $2$-approximation of $\maxx$.

We propose three algorithms, with varying
performance guarantees:
\begin{itemize}
\item
{\bf Modified knock-out}, described in Section~\ref{subsec:knockout},
has linear query complexity and with high probability outputs a
$3$-approximation of $\maxx$ against both  the adaptive and non-adaptive adversaries.
\item
{\bf Quick-select}, described in
Section~\ref{subsec:quickselect}, outputs a $2$-approximation to $\maxx$ (against both adversaries). It also has a linear expected query complexity against  non-adaptive adversarial comparators. 
\item
{\bf Knock-out and quick-select combination}, described 
in Section~\ref{sec:combined}, has linear query
complexity, and with high probability outputs a $2$-approximation of
$\maxx$ even against  adaptive adversarial comparators. 
\end{itemize}

\noindent We now go over these algorithms in detail.
\subsection{Modified knock-out}
\label{subsec:knockout}
For simplification, in this section we assume that $\log n$ is
an integer.  The knock-out algorithm derives its name from knock-out
competitions where the tournament is divided into $\log n$ successive
rounds. In each round the inputs are paired at random and the winners
advance to the next round. Therefore, in round $i$ there are
$\frac{n}{2^{i-1}}$ inputs. The winner at the end of $\log n$
rounds is declared as the maximum.

Under our adversarial model, at each round of the knock-out
algorithm, the largest remaining input decreases by at most one.
Therefore, knock-out algorithm finds at least $\log n$-approximation
of $\maxx$.  Analyzing the precise approximation error of knock-out
algorithm appears to be difficult.  \jnecess{However,
  simulations suggest that for any large $n$, for the set consisting
  of $0.2\cdot n$ 0's, $\alpha\cdot n$ 1's, $(0.7-\alpha)\cdot n$ 2's,
  $0.1\cdot n$ 3's, and a single 4, where $0<\alpha<0.7$ is an
  appropriately chosen parameter, the knock-out algorithm is not able
  to find a $3$-approximation of $\maxx$ with positive constant
  probability. The problem with knock-out algorithm is that
  if at any of the $\log n$ rounds, many inputs are within $1$ from
  the largest input at that round, there is a fair chance that the
  largest input will be eliminated. If this elimination happens in
  several rounds, we will end up with a number significantly
  smaller than $\maxx$.}

To circumvent the problem of discarding large inputs, we select a specified number of inputs at each round and save them for the very end, thereby ensuring that at every round, if the largest input is eliminated, then an input within 1 from it has been saved.
We then perform a complete tournament on these saved inputs.
The algorithm is explained in $\algkomod$.

\begin{algorithm}
\caption{$\algkosub$ - Subroutine for $\algkomod$ and $\algcomb$}
\textbf{input:} $\cX$\\
\qquad pair the inputs of $\cX$ randomly, let $\cX'$ be the winners\\
\textbf{output:} $\cX'$ 
\end{algorithm}

\begin{algorithm}
\caption{$\algkomod$ - Modified knock-out algorithm}
\textbf{input:} $\cX,\errorcomb$\\
\qquad $\cY=\emptyset$, $n_1 =\left\lceil\frac1\errorcomb\ln \frac1\errorcomb \cdot\log n \right\rceil$\\
\qquad \textbf{while} $|\cX|>n_1$\\
\qquad\qquad  randomly choose $n_1$ inputs from $\cX$ and \emph{copy} them to $\cY$\\
\qquad\qquad  $\cX\leftarrow\algkosub(\cX)$\\
\qquad \textbf{end while} \\
\textbf{output:} $\algcomp(\cX \cup \cY)$
\end{algorithm}

In Theorem~\ref{thm:knockout}, we show that $\algkomod$ has $3$-approximation error less than $\errorcomb$. 

We first explain the algorithm, and then state the result. Let $n_1\df
\left\lceil\frac1\errorcomb\ln \frac1\errorcomb \cdot\log n
\right\rceil$. At each round, we add $n_1$ of the remaining inputs at
random to the multiset $\cY$ and run the knock-out subroutine
$\algkosub$ on the multiset $\cX$. When $|\cX| \le n_1$, we perform a
complete tournament on $\cX\cup\cY$, and declare the output as the
  winner. We show that, with probability at least $1-\errorcomb$, the
  final set $\cY$ contains at least one input which is a
  1-approximation of $\maxx$.
 %Since there are at most $\log n$ rounds, by the union bound, with
  probability greater than $1-\errorcomb$, an input within
  $1$-approximation of $\maxx$ remains in $\cX\cup\cY$. Since the
  complete tournament outputs a $2$-approximation of its maximum
  input, $\algkomod$ outputs a $3$-approximation of $\maxx$ with
  probability greater than $1-\errorcomb$.

\begin{theorem}
\label{thm:knockout}
For $n_1\ge 2$, we have $q_n^{\algkomod,\adaptive}<n+\frac12 (\log^4 n)\cdot \left\lceil\frac1\errorcomb\ln \frac1\errorcomb \right\rceil^2$ and $\pe_n^{\algkomod,\adaptive}(3)<\errorcomb$.
\end{theorem}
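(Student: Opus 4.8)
The plan is to prove the query bound by direct accounting and the error bound by tracking how far the ``running maximum'' can drift. For the queries: in the $i$-th pass of the while loop $\cX$ has $n/2^{i-1}$ inputs, so the call to $\algkosub$ uses $n/2^{i}$ comparisons; since $n_1\ge 2$ and $\log n$ is an integer the loop executes at most $\log n-1$ times (after $\log n-1$ halvings only $2\le n_1$ inputs are left), so all knock-out comparisons total $\sum_{i\ge1}n/2^{i}<n$. At termination $|\cX|\le n_1$ and $|\cY|\le(\log n-1)\,n_1$, hence $|\cX\cup\cY|\le n_1\log n$ and the final $\algcomp$ call costs $\binom{|\cX\cup\cY|}{2}<\tfrac12(n_1\log n)^2$. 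Because $\log n$ is a positive integer, $n_1=\lceil\tfrac1{\errorcomb}\ln\tfrac1{\errorcomb}\cdot\log n\rceil\le\lceil\tfrac1{\errorcomb}\ln\tfrac1{\errorcomb}\rceil\log n$, so $\tfrac12(n_1\log n)^2\le\tfrac12(\log^4 n)\lceil\tfrac1{\errorcomb}\ln\tfrac1{\errorcomb}\rceil^2$; adding the fewer-than-$n$ knock-out comparisons gives the stated bound.

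For the approximation guarantee, by Lemma~\ref{lem:complete-tournament} the concluding tournament returns a $2$-approximation of the largest element of $\cX\cup\cY$, so it suffices to show that, with probability at least $1-\errorcomb$, at termination $\cX\cup\cY$ holds an element of value at least $\maxx-1$. I would track $z_i$, the largest value present in $\cX$ or in $\cY$ at the start of pass $i$, so that $z_1=\maxx$ and $z_i$ is non-increasing (copying introduces nothing new into $\cY$, and the knock-out winners are a sub-multiset of the current $\cX$). The structural heart of the argument is that $z$ can strictly decrease in a pass only through a \emph{bad collision}: the element attaining $z_i$ must lie in $\cX$ and not already in $\cY$, it must fail to be copied into $\cY$ in that pass, $\algkosub$ must pair it with an input within $1$ of it, and the adversary must then declare it the loser --- and even then $z$ decreases by at most $1$. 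Hence $z$ can fall below $\maxx-1$ only if at least two bad collisions occur and, moreover, no element of value $\ge\maxx-1$ is ever copied into $\cY$ (any such copy pins $z\ge\maxx-1$ forever).

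It then remains to bound the probability of this. In a pass with $N=|\cX_i|$ inputs of which $m\ge1$ are ``heavy'' (value $\ge\maxx-1$, the current maximiser being one), the maximiser is paired with another heavy with probability at most $m/N$, while none of the $m$ heavies is among the $n_1$ copied into $\cY$ with probability at most $(1-n_1/N)^m\le e^{-mn_1/N}$, independently of the pairing. Since $\tfrac{m}{N}e^{-mn_1/N}$ is maximised over $m$ at $m=N/n_1$, where it equals $\tfrac1{en_1}$, the probability that a \emph{first} bad collision occurs in pass $i$ with no heavy caught up to and including that pass is at most $\tfrac1{en_1}$; a second bad collision then occurs in a later pass only with conditional probability at most one. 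Summing over the at most $\log n$ candidate passes for the first collision bounds the overall failure probability by order $\log n/(en_1)\le\errorcomb$, using $n_1\ge\tfrac1{\errorcomb}\ln\tfrac1{\errorcomb}\log n$.

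I expect this last step to be the real obstacle; the query count and the appeal to Lemma~\ref{lem:complete-tournament} are bookkeeping. The care is needed in: (i) proving cleanly that a single bad collision still leaves $z\ge\maxx-1$, so that two collisions are genuinely required; (ii) making the per-pass estimate $\tfrac1{en_1}$ uniform over the adversary- and randomness-dependent count $m$ of heavy inputs, which is precisely what forces the choice $n_1\approx\tfrac1{\errorcomb}\ln\tfrac1{\errorcomb}\log n$; and (iii) the mild asymmetry at the second collision, where the maximiser may be knocked out by an input merely within $1$ of the \emph{current} maximum (value $\ge\maxx-2$) rather than within $1$ of $\maxx$, together with the dependence between the random passes on which the two collisions land and the copying performed in them.
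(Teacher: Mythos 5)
Your proposal is correct, and the query-complexity accounting and the reduction of the error bound to ``with probability $1-\errorcomb$ some input of value at least $\maxx-1$ reaches the final $\algcomp$ call'' (followed by Lemma~\ref{lem:complete-tournament}) coincide with the paper's. Where you diverge is in how the probability of the failure event is controlled. The paper bounds the two marginals globally --- $\pr(\maxx\notin\cX_{m+1})\le\alpha m$ and $\pr(|\cY^*_{m+1}|=0)\le(1-\alpha)^{n_1}$ with $\alpha=\max_i\alpha_i$ --- and then takes $\max_\alpha\min\{\alpha m,(1-\alpha)^{n_1}\}$ at the crossover $\alpha=\errorcomb/\log n$. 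You instead couple the two sources of randomness \emph{within the pass where the maximiser is first knocked out}: since the $n_1$ copies and the pairing are independent choices of the algorithm, $\pr(\text{knockout and no heavy saved through that pass}\mid\text{history})\le\frac{m}{N}(1-n_1/N)^m\le\frac{1}{en_1}$ uniformly in the (history-dependent) $m$ and $N$, and a union bound over the at most $\log n$ passes finishes. This per-pass conditioning is arguably cleaner --- it sidesteps the slightly informal maximisation over the random quantity $\alpha$ in the paper --- at the cost of a marginally worse constant: your final bound $\log n/(en_1)\le\errorcomb/(e\ln\tfrac1\errorcomb)$ only beats $\errorcomb$ when $\errorcomb<e^{-1/e}\approx0.69$, a harmless restriction. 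Two small remarks: the ``two bad collisions'' refinement is true but does no work, since you bound the second collision by probability one, so your failure event collapses to exactly the paper's (``$\maxx$ eliminated and no input of $\cX^*$ ever copied''); and your worry (iii) about the second collision involving values as low as $\maxx-2$ evaporates for the same reason. The single point to make airtight in a full write-up is that the adaptive adversary controls only comparison outcomes, not the pairing or the copying, so the bounds $\frac{m}{N}$ and $(1-n_1/N)^m$ hold conditionally on any history the adversary can induce.
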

\begin{proof}
The number of comparisons made by $\algkosub$ is at most
$\frac{n}{2}+\frac{n}{4}+\frac{n}{8}+\ldots<n$. Observe that 
$\algkosub$ is called $m\df\left\lceil\log \frac{n}{n_1}\right\rceil$ times. 
Let $\cX_i$ be the multiset $\cX$ at the start of $i$th call to $\algkosub$.
Let $\cX_{m+1}$ and $\cY_{m+1}$ be the multisets $\cX$ and $\cY$ right
before calling $\algcomp$. Then, 
\begin{align*}
|\cX_{m+1}\cup\cY_{m+1}|&\le |\cX_{m+1}|+|\cY_{m+1}|\\
&\le n_1+ \sum_{i=1}^{m} \left(|\cY_{i+1}|-|\cY_{i}|\right)\\
&\le n_1+mn_1\\
&= \left(\left\lceil\log \frac{n}{n_1}\right\rceil+1\right) \cdot\left\lceil\frac1\errorcomb\ln \frac1\errorcomb \cdot \log n \right\rceil\\
&\le \left(\left\lceil\log \frac{n}{n_1}\right\rceil+1\right)\cdot \left\lceil\frac1\errorcomb\ln \frac1\errorcomb \right\rceil  \left\lceil\log n\right\rceil\\
&\le \log^2 n\cdot \left\lceil\frac1\errorcomb\ln \frac1\errorcomb \right\rceil,
\end{align*}
where the last inequality follows as $n_1\ge 2$ and $\log n$ is
an integer.  Since the complete tournament is quadratic in the
  input size, the total number of queries is at most $n+\frac12 \log^4
  n \left\lceil\frac1\errorcomb\ln \frac1\errorcomb \right\rceil^2$.
 
Next, we bound the error of $\algkomod$. Let
 \[
  \cX^*\df\{x\in\cX:x\ge\maxx-1\},
 \]
be the multiset of all inputs that are at least $\maxx-1$. For $i \le m+1$, let 
$\cX_i^*=\cX_i \cap \cX^*$ and $\cY_{m+1}^*=\cY_{m+1} \cap \cX^*$.
Let $\alpha_i\df\frac{|\cX_i^*|}{|\cX_i|}$ and $\alpha=\max\{\alpha_1,\alpha_2,\ldots,\alpha_m\}$.
We show that with high probability, $|\cX_{m+1}^*\cup\cY_{m+1}^*| \ge 1$, 
\ie some input in $\cX_{m+1}\cup\cY_{m+1}$ belongs to $\cX^*$.
In particular, we show that with probability $1-\errorcomb$, for large $\alpha$,
$|\cY^*_{m+1}| > 0$, and for small $\alpha$,
$\maxx\in \cX_{m+1}$. Observe that,
 \begin{align*}
 \pr(\maxx\notin\cX_{m+1}^*) &= \sum_{i=1}^m \pr(\maxx\notin\cX_{i+1}^*|\maxx \in \cX_i)\cdot\pr(\maxx \in \cX_i)\\
  &\le \sum_{i=1}^m \pr(\maxx\notin\cX_{i+1}^*|\maxx \in \cX_i)\\
  &\stackrel{(a)}{\le} \sum_{i=1}^m \frac{|\cX_i^*|-1}{|\cX_i|-1}\\
  &\le \sum_{i=1}^m \alpha_i\\
  &\le \alpha m,
 \end{align*}
where $(a)$ follows since at round $i$, $\algkosub$ randomly pairs the inputs and only inputs in
$\cX_{i}^*\backslash \{\maxx\}$ are able to eliminate $\maxx$. Next we discuss $\pr(|\cY_{m+1}^*|=0)$. At round $i$, the probability that an input in $\cX^*$ is not picked up in $\cY$ is
\[
\frac{\binom{|\cX_i|-|\cX_i^*|}{n_1}}{\binom{|\cX_i|}{n_1}}\le \left(1-\frac{|\cX_i^*|}{|\cX_i|}\right)^{n_1}=
\left(1-\alpha_i\right)^{n_1}.
\]
Therefore, 
 \begin{align*}
  \pr(|\cY_{m+1}^*|=0)&\le \prod_{i=1}^m (1-\alpha_i)^{n_1}\\
  &\le \min_i (1-\alpha_i)^{n_1}\\
  &=(1-\alpha)^{n_1}.
 \end{align*}
As a result, 
\begin{align*}
 \pr(|\cX_{m+1}^*\cup\cY_{m+1}^*|=0)&=\pr(|\cX_{m+1}^*|=0\textnormal{ }\land\textnormal{ } |\cY_{m+1}^*|=0)\\
 &\le \pr(\maxx\notin\cX_{m+1}^* \textnormal{ }\land\textnormal{ } |\cY_{m+1}^*|=0)\\
 &\le \max_\alpha \min \{\pr(\maxx\notin\cX_{m+1}^*),\pr(|\cY_{m+1}^*|=0)\}\\
 &\le \max_\alpha \min\left\lbrace \alpha m,(1-\alpha)^{n_1}\right\rbrace\\ 
 &\stackrel {(a)}{\le}   \left.\max\left\lbrace\alpha m,(1-\alpha)^{n_1}\right\rbrace\right|_{\alpha=\frac{\errorcomb}{\log n}}\\
 &=\max\left\lbrace\frac{\errorcomb m}{\log n},\left(1-\frac{\errorcomb}{\log n}\right)^{n_1}\right\rbrace\\
 &\stackrel {(b)}{<}\errorcomb,
\end{align*}
where $(a)$ follows since the first argument of the $\min$
increases and the second argument decreases with $\alpha$. Also, $(b)$
follows since $m\le \log n$ and $n_1 =\left\lceil\frac1\errorcomb\ln
\frac1\errorcomb \log n \right\rceil$.

So far, we have shown that with probability $1-\errorcomb$, there
exists a $1$-approximation of $\maxx$ in $\cX_{m+1}\cup\cY_{m+1}$.
From Lemma~\ref{lem:complete-tournament}, $\algcomp$ gives a
$2$-approximation of the maximum input. Consequently, with
  probability $1-\errorcomb$, $\algkomod$ outputs a $3$-approximation
  of $\maxx$.
\end{proof}

In Appendix~\ref{sec:app-example}, we show that $\algkomod$ cannot
output better than $3$-approximation of $\maxx$ with constant probability. 

We end this subsection with the following open question.
\begin{openq}
What is the best approximation that the simple knock-out algorithm can achieve?
\end{openq}

\subsection{Quick-select}
\label{subsec:quickselect}
Motivated by quick-sort, we propose a quick-select algorithm $\algquickselect$ that at each round
compares all the inputs with a random
pivot to provide stronger performance guarantees against the non-adaptive adversary. 

\begin{algorithm}
\caption{$\algqssub$ - Subroutine for $\algquickselect$ and $\algcomb$}
\textbf{input:} $\cX$\\
\qquad pick a pivot $x_p\in\cX$ at random\\
\qquad compare $x_p$ with all other inputs in $\cX$\\
\qquad 	let $\cY\subset \cX\backslash\{x_p\}$ be the multiset of inputs that beat $x_p$\\
\textbf{output:} if $\cY\ne \emptyset$ output $\cY$ otherwise output $\{x_p\}$ 
\end{algorithm}
\begin{algorithm}
\caption{$\algquickselect$ - Quick-select}
\textbf{input:} $\cX$\\
\qquad \textbf{while} $|\cX|>1$\\
\qquad\qquad $\cX\leftarrow\algqssub(\cX)$\\
\qquad \textbf{end while}\\
\textbf{output:} the unique input in $\cX$
\end{algorithm}

We show that $\algquickselect$ provides a $2$-approximation with no
error against both the adaptive and non-adaptive adversaries. To show this result, observe that $\maxx$ will only be eliminated if a $1$-approximation of $\maxx$ is chosen as pivot and therefore, only inputs that are $2$-approximation of $\maxx$ will survive.

\begin{lemma}
\label{lem:quick}
  $\pe_n^{\algquickselect,\adaptive}(2)=0$.
\end{lemma}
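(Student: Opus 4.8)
The plan is to track how the largest surviving value can decrease across the rounds of $\algquickselect$ and show it never drops below $\maxx - 2$. Concretely, let $\cX^{(0)}=\cX$ and let $\cX^{(k+1)}=\algqssub(\cX^{(k)})$ be the multiset after the $k$th iteration of the while loop. I would prove by induction on $k$ the invariant that $\max\{x:x\in\cX^{(k)}\}\ge \maxx-2$ in fact, I would prove the slightly stronger statement that $\cX^{(k)}$ contains an element that is a $2$-approximation of $\maxx$, and moreover that if $\maxx$ itself has already been eliminated then $\cX^{(k)}$ already contains an element $\ge \maxx-1$ that ``replaced'' it. The base case is immediate since $\maxx\in\cX^{(0)}$.

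For the inductive step, fix a round operating on a multiset $\cZ=\cX^{(k)}$, with pivot $x_p$ chosen at random, and let $\cY\subseteq\cZ\setminus\{x_p\}$ be the set of inputs that beat $x_p$; the new multiset is $\cY$ if $\cY\neq\emptyset$ and $\{x_p\}$ otherwise. The key observation, as the paragraph before the lemma indicates, is about when the current best element $b=\max\cZ$ can fail to survive. By the induction hypothesis $b\ge\maxx-2$, and I want to show the round produces an element $\ge\maxx-2$ as well. There are two cases. If $b$ is chosen as pivot, then either $\cY=\emptyset$, in which case $b$ itself survives; or some $z\in\cY$ beat $b$, which by the comparator definition~\eqref{eq:comp} forces $|z-b|\le 1$ (since if $|z-b|>1$ the comparator must return $\max\{z,b\}=b$), so $z\ge b-1\ge\maxx-3$ — this is not quite good enough, so I actually need to use that $b$ is the \emph{maximum} of $\cZ$, giving $z\le b$ and $z\ge b-1$; when $b=\maxx$ this yields $z\ge\maxx-1$, and when $b<\maxx$ I invoke the strengthened hypothesis. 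If $b$ is not chosen as pivot, then $b$ survives iff $b$ beats $x_p$; since $b=\max\cZ\ge x_p$, the comparator returns $b$ whenever $|b-x_p|>1$, and when $|b-x_p|\le 1$ the adversary may eliminate $b$, but then $x_p\ge b-1$ survives in $\{x_p\}$ or (if $\cY\neq\emptyset$) some element $\ge x_p - 1 \ge b-2$...

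Here the bookkeeping needs care, so the cleaner route is the following reformulation: define $M_k=\max\{x:x\in\cX^{(k)}\}$ and prove directly that $M_{k+1}\ge M_k - ?$ is \emph{not} what we want (that would lose too much per round); instead prove the clean statement that \emph{$\maxx$ is eliminated only in a round whose pivot is a $1$-approximation of $\maxx$, and in that round every surviving element is a $2$-approximation of $\maxx$}. Indeed, if $\maxx\in\cZ$: if $\maxx$ is the pivot, any $z$ beating it has $|z-\maxx|\le1$ hence $z\ge\maxx-1$; if $\maxx$ is not the pivot and the pivot $x_p$ satisfies $|x_p-\maxx|\le1$, then $\maxx$ may be eliminated but $x_p\ge\maxx-1$ and every surviving element either equals $x_p$ or beat $x_p$, and anything beating $x_p$ is within $1$ of $x_p$ only if... no — anything that beats $x_p$ could be genuinely larger. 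So in fact the survivors are $\ge x_p - 1\ge \maxx - 2$ (an element either is $x_p$ or beat $x_p$, and if it beat $x_p$ and differs by more than $1$ it is larger than $x_p$, hence $\ge x_p \ge \maxx-1$; if within $1$ it is $\ge x_p-1\ge\maxx-2$). If instead $|x_p-\maxx|>1$, the comparator is forced and $\maxx$ survives. Once $\maxx$ is gone, the surviving set has maximum $\ge\maxx-2$, and I re-run the same case analysis with ``$\maxx$'' replaced by ``the current maximum, which is $\ge\maxx-2$'' to check it can only decrease to... again $\ge\maxx-2$? That's the subtle point: I must verify the floor $\maxx-2$ is \emph{absorbing}, i.e. once the max is in $[\maxx-2,\maxx]$ it stays there. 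Suppose $\cZ$ has max $b\in[\maxx-2,\maxx]$. If $b$ is pivot: survivors are $b$ or elements beating $b$, which are within $1$ below $b$, i.e. $\ge b-1\ge\maxx-3$ — too weak unless $b\ge\maxx-1$. So the genuinely correct invariant to carry is \textbf{the max of $\cX^{(k)}$ is $\ge\maxx-1$ OR $\maxx-2\in$... }; cleanest is: there exists a surviving element $y$ with $y\ge\maxx-1$, \emph{unless} the max has already fallen into $[\maxx-2,\maxx-1)$ by being "bumped" once, after which I show it cannot fall further.

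The main obstacle, then, is pinning down the right invariant so that the error floor of $2$ is provably absorbing rather than decaying by $1$ each round. I expect the correct formulation is the invariant: \emph{either $\maxx\in\cX^{(k)}$, or $\max\cX^{(k)}\ge\maxx-1$, or ($\max\cX^{(k)}\ge\maxx-2$ and this state persists)} — and the clean way to see persistence is that an element can only be beaten by something at most $1$ smaller \emph{when it is the pivot}, whereas when it is the max and not the pivot it beats the pivot automatically if they differ by more than $1$; so to lose from $[\maxx-2,\maxx)$ to below $\maxx-2$ in a single round would require the max $b$ to be the pivot \emph{and} be beaten, giving a survivor $\ge b-1$, and to then repeat — but each such "bump" strictly decreases the finite multiset, and one checks $b-1\ge\maxx-2$ already fails only if $b<\maxx-1$, i.e. we were already below the $\maxx-1$ level, in which case... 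I'll need to argue the max, viewed as a value in the finite set $\cX$, once below $\maxx-1$ is itself still $\ge\maxx-2$ because the only elements of $\cX$ that are $<\maxx-1$ and could be the running max must have been reached by beating a pivot within distance $1$ of something $\ge\maxx-1$. Making this last chain airtight — essentially a potential/monovariant argument on the value of the running maximum — is the crux; everything else is a direct appeal to the two cases of~\eqref{eq:comp}.
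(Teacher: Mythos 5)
There is a genuine gap, and it is exactly at the point you flag as ``the crux.'' Midway through you correctly state and verify the key fact: \emph{$\maxx$ is eliminated only in a round whose pivot $x_p$ satisfies $x_p\ge\maxx-1$, and in that round every surviving element is $\ge x_p-1\ge\maxx-2$.} That statement already finishes the proof, and you do not notice it. The reason is that you then switch to tracking only the \emph{maximum} of the surviving multiset and try to show that the floor $\maxx-2$ is ``absorbing'' under further rounds. That bookkeeping is both unnecessary and, as you set it up, unclosable: knowing only that the running max is $\ge\maxx-2$ is too weak an induction hypothesis, since a pivot equal to that max could in principle admit survivors at $\maxx-3$. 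The correct observation is that the bound $\ge\maxx-2$ holds for \emph{every} element of the multiset at the round where $\maxx$ dies, and each subsequent call to $\algqssub$ outputs a sub-multiset of its input (either $\cY\subseteq\cX\setminus\{x_p\}$ or $\{x_p\}$). Hence the final output is an element of that multiset, and is therefore $\ge\maxx-2$ with no further per-round analysis. This is precisely the paper's three-line proof: if the output is not $\maxx$, let $x_p$ be the pivot in the round where $\maxx$ is discarded (either $\maxx$ itself, or the pivot that beat it), note $x_p\ge\maxx-1$ by~\eqref{eq:comp}, and conclude that all survivors of that round --- hence the output --- are $\ge x_p-1\ge\maxx-2$.

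Two smaller points. First, your case analysis for ``$b$ is the pivot'' briefly derives $z\ge b-1\ge\maxx-3$ and tries to patch it with a ``strengthened hypothesis''; this is a symptom of the same wrong framing, since that case only needs to be analyzed for the round in which $\maxx$ (not a generic running max $b$) is eliminated. Second, since the argument uses only the deterministic guarantee of~\eqref{eq:comp} when two inputs differ by more than $1$, it applies verbatim to the adaptive adversary, which is what the lemma claims; your write-up never needs (and should not invoke) any distributional or non-adaptivity assumption.
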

\begin{proof}
 If the output is $\maxx$, the lemma holds. Otherwise, $\maxx$
 is discarded when it was chosen as a pivot or 
 compared with a pivot. Let $x_p$ be the pivot when $\maxx$ 
 is discarded, hence $x_p\ge \maxx-1$. 
 By the algorithm's definition, all the surviving inputs
 are at least $x_p-1\ge \maxx-2$.
\end{proof}

We now show that the expected query complexity of
  $\algquickselect$ against a non-adaptive adversary is at most
  $2n$. This result follows from the observation that the non-adaptive
  adversary fixes the comparison graph from the start, and
    hence a random pivot wins against half of the inputs in
  expectation. This idea is made rigorous in the proof of
  Lemma~\ref{lem:expected_quick_select}.

We finally consider an example for which $\algquickselect$ requires
$\binom{n}{2}$ queries against the adaptive adversary.  
\begin{lemma}
 \label{lem:expected_quick_select}
 $q_n^{\algquickselect,\nonadaptive} < 2n$.
\end{lemma}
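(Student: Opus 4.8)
The plan is to bound the expected number of queries by $\algquickselect$ against a non-adaptive adversary by analyzing each call to $\algqssub$ separately and summing over rounds. Fix the input $\cX$ and fix a non-adaptive comparator $\cC$, which by definition fixes the entire comparison graph on $\cX$ in advance. At any round the current multiset is some $\cZ\subseteq\cX$; $\algqssub$ picks a pivot $x_p\in\cZ$ uniformly at random, makes $|\cZ|-1$ comparisons, and returns the set of inputs in $\cZ$ that beat $x_p$ (or $\{x_p\}$ if that set is empty). The key point is that, because the graph is fixed, for a uniformly random pivot the expected number of inputs that beat it equals the average in-degree within $\cZ$, which is exactly $(|\cZ|-1)/2$ since every one of the $\binom{|\cZ|}{2}$ edges contributes one ``beat'' to exactly one endpoint.

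The main step is therefore a recursion on the expected size of the surviving set. Let me define, for a round starting with a set of size $k$, the expected size of the output of $\algqssub$. Conditioned on the pivot $x_p$, the output size is $\max\{d(x_p),1\}$ where $d(x_p)$ is the number of inputs in the current set beating $x_p$; averaging over the uniform pivot gives expected output size at most $(k-1)/2 + 1/k \cdot(\text{number of pivots with }d=0)$, but the cleanest bound is simply $\EE[\max\{d(x_p),1\}]\le \EE[d(x_p)]+1 \le (k-1)/2 + 1$ — though to get the clean $2n$ bound I will instead argue directly on the total. Write $K_1=n, K_2, K_3,\ldots$ for the (random) sizes of $\cX$ at the start of successive rounds, with $K_t=1$ terminating. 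The total number of queries is $\sum_t (K_t-1)\le \sum_t K_t$. I will show $\EE[K_{t+1}\mid K_t=k]\le k/2$ for $k\ge 2$: indeed $K_{t+1}=\max\{d(x_p),1\}$, and since at least one input beats the pivot whenever the pivot is not a source, while $\EE[d(x_p)]=(k-1)/2$, a short case analysis (separating pivots of in-degree $0$, of which there are at most... — handled by noting $\max\{d,1\}\le d + \mathbb{1}[d=0]$ and that $\sum_{x}\mathbb 1[d(x)=0]\le 1$ more than compensated, or simply $\max\{d,1\}\le \tfrac{d+1}{1}$ when $d\ge1$ and $=1\le\tfrac{0+1}{1}\cdot$... ) yields $\EE[K_{t+1}\mid K_t=k]\le (k-1)/2 + 1/2 = k/2$. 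Then by the tower rule $\EE[K_{t+1}]\le \EE[K_t]/2$, and since $\EE[K_1]=n$ we get $\EE[K_t]\le n/2^{t-1}$, so $\EE\big[\sum_t K_t\big]\le \sum_{t\ge1} n/2^{t-1} = 2n$, hence $q_n^{\algquickselect,\nonadaptive}\le\sum_t\EE[K_t-1]<2n$.

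The step I expect to be the main obstacle is making the one-step contraction $\EE[K_{t+1}\mid K_t=k]\le k/2$ fully rigorous, since the ``$\max\{\cdot,1\}$'' in $\algqssub$ (outputting $\{x_p\}$ when nobody beats the pivot) spoils the naive averaging: if many inputs tie into near-sources, the $+1$ corrections could in principle pile up. The fix is the observation that in the fixed comparison graph there is at most one vertex of in-degree $0$ restricted to the current set — wait, that is false for arbitrary tournaments — so instead I will bound $\EE[\max\{d(x_p),1\}] = \EE[d(x_p)] + \EE[\mathbb 1[d(x_p)=0]] \le (k-1)/2 + \Pr[d(x_p)=0]$ and then observe $\Pr[d(x_p)=0]\le 1/2$ trivially for $k\ge2$ (in fact one can do better), giving the bound; and even the crude constant only changes the final bound to $O(n)$, so I will either present the tight argument or note that $K_{t+1}\le\max\{d(x_p),1\}\le d(x_p)+1$ always and carry the harmless $+1$ through a slightly different summation. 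A clean alternative that sidesteps the issue entirely: charge each query to the comparison's loser in the fixed graph; a fixed input $x$ is the loser of its round only if it survives to that round and loses to the pivot, and one bounds the expected number of rounds $x$ survives by $\sum_k \Pr[x\text{ beats a random pivot among }k]$, a telescoping-type sum that again totals $O(1)$ per input. I will write up whichever of these gives the constant $2$ most transparently.
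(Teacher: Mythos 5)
Your argument is essentially correct and reaches the right constant, but it is organized differently from the paper's proof. The paper sets up the recursion $q_n^{\algquickselect}=n-1+\frac1n\sum_i q_{\din(x_i)}^{\algquickselect}$ and closes it by strong induction with the hypothesis $q_{n'}^{\algquickselect,\nonadaptive}\le 2(n'-1)$, using $\sum_i\din(x_i)=\binom n2$; you instead track the process round by round and prove a one-step contraction of the expected surviving-set size. Both proofs hinge on the identical key fact — the comparison graph is fixed in advance, so a uniform pivot is beaten by $(k-1)/2$ inputs in expectation — but your decomposition (sum over rounds, geometric decay of $\EE[K_t]$) avoids the strong induction and makes the ``expected halving'' mechanism explicit, at the cost of having to handle the absorbing state by hand. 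Two points need attention in the write-up. First, your hesitation about sources is backwards: a tournament on $k\ge2$ vertices has \emph{at most one} vertex of in-degree $0$ (two sources would have to beat each other), so $\Pr[d(x_p)=0]\le 1/k$ and $\EE[\max\{d(x_p),1\}]\le\frac{k-1}{2}+\frac1k\le\frac k2$ cleanly; even without that observation, a counting argument gives at most $k/2$ sources, which also suffices. Second, the claim $\EE[K_t]\le n/2^{t-1}$ is false as stated once the chain is absorbed at $K_t=1$ (it stays at $1$ forever, exceeding $n/2^{t-1}$ for large $t$). The repair is one line: work with $M_t\df K_t-1\ge0$, which is exactly the number of queries made in round $t$; for $k\ge2$ you get $\EE[M_{t+1}\mid K_t=k]\le k/2-1\le(k-1)/2=M_t/2$, and for $k=1$ trivially $M_{t+1}=0=M_t/2$, so $\EE[M_t]\le(n-1)2^{-(t-1)}$ and $q_n^{\algquickselect,\nonadaptive}=\sum_t\EE[M_t]\le2(n-1)<2n$. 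With that substitution your proof is complete and matches the paper's bound.
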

\begin{proof}
Recall that the non-adaptive adversary can be modeled as a complete
directed graph where each node is an input and there is an edge from
$x$ to $y$ if $\cC(x,y)=x$. Let $\din(x)$ be the in-degree of $x$.

At round $i$ the algorithm chooses a pivot $x_p$ at random and
compares it to all the remaining inputs. By keeping the winners,
$\max\{\din(x_p),1\}$ inputs will remain for the next round. As a
result, we have the following recursion for non-adaptive adversaries,
\begin{align*}
 q_n^{\algquickselect}&=\EE\left[Q_n^{\algquickselect}\right]\\
 &= n-1+\frac{1}{n}\sum_{i=1}^{n} \EE\left[Q_{\din(x_i)}^{\algquickselect}\right]\\
 &=n-1+\frac{1}{n}\sum_{i=1}^{n} q_{\din(x_i)}^{\algquickselect}.
\end{align*}
By~\eqref{eqn:nonadaptivemodel},
\begin{align}
 \label{eqn:lem_expected}
  q_n^{\algquickselect,\nonadaptive}&=\max_{\cC\in\nonadaptivecomp}\max_{\cX} q_n^{\algquickselect}\\
 \nonumber
&=\max_{\cC\in\nonadaptivecomp}\max_{\cX} \left[n-1+\frac{1}{n}\sum_{i=1}^{n} q_{\din(x_i)}^{\algquickselect}\right]\\
 \nonumber
 &\le n-1+\frac{1}{n}\sum_{i=1}^{n} \max_{\cC\in\nonadaptivecomp}\max_{\cX} q_{\din(x_i)}^{\algquickselect}\\
 \nonumber
 &=n-1+\frac{1}{n}\sum_{i=1}^{n} q_{\din(x_i)}^{\algquickselect,\nonadaptive},
\end{align}
where the inequality follows as maximum of sums is at most sum of maximums. We prove by strong induction that
$q_n^{\algquickselect,\nonadaptive}\le 2(n-1)$. It holds for $n=1$. Suppose it holds for all
$n'<n$, then,
 \begin{align*}
  q_n^{\algquickselect,\nonadaptive}&\le n-1+\frac{1}{n}\sum_{i=1}^{n} q_{\din(x_i)}^{\algquickselect,\nonadaptive}\\
  &\le n-1+\frac{1}{n} \sum_{i=1}^{n} 2\cdot \din(x_i)\\
  &=n-1+\frac{n(n-1)}{n}\\
  &\le2(n-1),
 \end{align*}
where the equality follows since the in-degrees sum to $\frac{n(n-1)}{2}$.
\end{proof}

Lemma~\ref{lem:expected_quick_select} shows that
$q_n^{\algquickselect,\nonadaptive}<2n$.  Next, we show a naive concentration bound for the query complexity of $\algquickselect$. By Markov's inequality, for a non-adaptive adversary,
\[
 \pr(Q_n^{\algquickselect}>4n)\le \frac12.
\]
Let $k$ be an integer multiple of $4$. Now suppose we run $\algquickselect$, allowing $kn$ queries. At each $4n$ queries, the $\algquickselect$ ends with probability $\ge \frac12$. Therefore,
\[
\pr(Q_n^{\algquickselect}>kn)\le 2^{-\frac{k}{4}}.
\]
This naive bound is exponential in $k$. The next lemma shows
  a tighter super-exponential concentration bound on the query complexity of the algorithm
  beyond its expectation. We defer the proof to
  appendix~\ref{sec:app-upp}.
\begin{lemma}
\label{lem:conselect}
 Let  $k'=\max\{e,k/2\}$. For a non-adaptive adversary, $\pr(Q_n^{\algquickselect}>kn)\le e^{-(k-k')\ln k'}$.
\end{lemma}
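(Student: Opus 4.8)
The plan is to bound the tail probability by strong induction on $n$. Write $G$ for the tournament on $[n]$ modelling the non-adaptive adversary (an edge from $u$ to $v$ meaning $\cC(x_u,x_v)=x_u$), write $d_v=\din(v)$, and set $P(n,k):=\max_G\pr\!\big(Q_n^{\algquickselect}>kn\big)$, the probability being over the algorithm's pivot choices. Since $g(k):=e^{-(k-k')\ln k'}$ exceeds $1$ for $k\le e$, and since $P(n,k)=0$ whenever $kn\ge\binom n2$ (so every $n$ below a fixed constant is trivial), it suffices to treat $k>e$ with $n$ large.

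The first step peels off one round. The initial call to $\algqssub$ spends $n-1$ queries, picks a uniformly random pivot $v$, and hands the sub-tournament on $v$'s in-neighbours (of size $d_v$, or the singleton $\{v\}$ when $d_v=0$) to the recursion, so $Q_n^{\algquickselect}<n+Q'$ with $Q'$ the query count of $\algquickselect$ on that sub-tournament. Conditioning on $v$ and using $(k-1)n=\tfrac{(k-1)n}{d_v}\,d_v$,
\[
P(n,k)\ \le\ \max_G\ \frac1n\sum_{v:\,d_v\ge1}P\!\Big(d_v,\ \tfrac{(k-1)n}{d_v}\Big)\ \le\ \max_G\ \frac1n\sum_{v}g\!\Big(\tfrac{(k-1)n}{d_v}\Big),
\]
the second inequality being the induction hypothesis applied to each $d_v<n$ (using $g\ge0$; the $d_v=0$ terms vanish since $k>1$). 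It remains to show the rightmost quantity is at most $g(k)$.

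The second step handles the maximum over tournaments. The map $d\mapsto g\big((k-1)n/d\big)$ is convex on $(0,n-1]$: a short calculus check, since $g$ is decreasing and log-concave while $(k-1)n/d$ is convex and decreasing, which reduces the claim to $r\mapsto r^2g'(r)$ being increasing on the range $r\ge k-1$ that is actually used. The in-degree sequence of the transitive tournament $(0,1,\ldots,n-1)$ attains Landau's inequalities $\sum_{\text{top }j}d_i\le\binom n2-\binom{n-j}2$ with equality for every $j$, hence majorizes the in-degree sequence of every $n$-vertex tournament; Schur-convexity of $\sum_v\phi(d_v)$ for convex $\phi$ then gives
\[
\frac1n\sum_v g\!\Big(\tfrac{(k-1)n}{d_v}\Big)\ \le\ \frac1n\sum_{d=1}^{n-1}g\!\Big(\tfrac{(k-1)n}{d}\Big)\ \le\ \int_0^1 g\!\Big(\tfrac{k-1}{x}\Big)\,dx\ =\ (k-1)\!\int_{k-1}^{\infty}\!\frac{g(r)}{r^2}\,dr,
\]
the middle step comparing the increasing summand to its integral and the last substituting $r=(k-1)/x$.

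The final step is the one-variable inequality $(k-1)\int_{k-1}^{\infty} g(r)/r^2\,dr\le g(k)$. For $k-1\ge2e$ the integrand $g(r)/r^2$ is log-concave, hence bounded by the exponential through its left endpoint, which yields $(k-1)\int_{k-1}^{\infty} g(r)/r^2\,dr\le g(k-1)/\big((k-1)\lambda\big)$ with $\lambda=-\tfrac{d}{dr}\ln\!\big(g(r)/r^2\big)\big|_{r=k-1}$; substituting $g(r)=(r/2)^{-r/2}$ one verifies $(k-1)\lambda\ge g(k-1)/g(k)$, as the left side grows like $\tfrac k2\ln\tfrac k2$ whereas $g(k-1)/g(k)\sim\sqrt{ek/2}$. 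In the exponential regime one instead uses $g(r)=e^{-(r-e)}$ together with a crude bound on the negligible part of the integral past $2e$. I expect the main obstacle to be exactly this last calculus, together with the two verifications in the second step: the convexity of $d\mapsto g((k-1)n/d)$ and the majorization claim must be checked, and the transition at $k=2e$ (where the integral splits) and the narrow window of small $k$ just above $e$ need separate, careful bookkeeping — everything else is mechanical.
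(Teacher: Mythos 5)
Your route is genuinely different from the paper's. The paper runs a Chernoff argument: it sets $\lambda=\frac1n\ln k'$, bounds the moment generating function $\Phi(i)=\EE[e^{\lambda Q_i}]$ by induction via the recursion $\Phi(n)\le\frac{e^{\lambda n}}{n}\sum_j\Phi(\din(x_j))$, and only then applies Markov's inequality. You instead induct directly on the tail probability, peeling off one round and rescaling the threshold to $(k-1)n/\din(v)$ for the subproblem. Both proofs ultimately invoke the same extremality fact --- a convex function of the in-degrees, summed subject to $\sum_v\din(v)=\binom n2$, is maximized by the transitive tournament $(0,1,\ldots,n-1)$ --- but the paper applies it to $d\mapsto e^{k'\lambda d}$, which is globally convex, so the step is immediate and the whole proof closes with one clean one-variable lemma (the paper's Lemma~\ref{haha}). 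Your approach, if completed, would be arguably more transparent probabilistically, but it pushes all the difficulty into analytic facts about the tail function $g$ itself, which is where the problems are.

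Concretely, there are two genuine gaps. First, the convexity of $d\mapsto g\bigl((k-1)n/d\bigr)$, which you correctly reduce to $rg''(r)+2g'(r)\ge0$, \emph{fails} in the exponential regime: there $g(r)=e^{-(r-e)}$ gives $rg''+2g'=g(r)(r-2)$, which is negative for $r<2$. Since $r=(k-1)n/d$ dips below $2$ precisely when $d>(k-1)n/2$, for every $k\in(e,3)$ a constant fraction of the degree range lies in the concave region, and the Schur-convexity/majorization step is simply not valid as stated; this is not bookkeeping in a "narrow window" but a broken step for all $k<3$ (it can likely be repaired by convexifying $g$ or treating the high-degree vertices separately, but that argument is absent). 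Second, the closing inequality $(k-1)\int_{k-1}^{\infty}g(r)r^{-2}\,dr\le g(k)$ and the base cases are asserted rather than proved: you defer the decisive calculus ("I expect the main obstacle to be exactly this last calculus"), and the base case is not vacuous as claimed, since $kn\ge\binom n2$ only disposes of $n\le 2k+1$, leaving all pairs $(n,k)$ with $n\ge 7$ and $e<k<(n-1)/2$ to be checked at the bottom of the induction. Numerical checks suggest the integral inequality does hold down to $k=e$, so I believe the strategy can be completed, but as written this is a proof outline with its hardest steps missing, whereas the paper's MGF route avoids every one of these issues.
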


While $\algquickselect$ has linear expected query complexity under the
non-adaptive adversarial model, the following example suggested to us
by Jelani Nelson~\cite{JelaniNPC} shows that it has a quadratic query
complexity against an adaptive adversary.
\begin{example}
\label{exm:expected_quick_select}
 Let $\cX=\{0,0,\ldots, 0\}$. At each round, the adversary declares the pivot to be smaller than all the other inputs. Consequently, only the pivot is eliminated and the query complexity is $\binom{n}{2}$. 
\end{example}

\subsection{Knock-out and quick-select combination}
\label{sec:combined}
 %\todo{JD:Should we explain the algorithm in words?
%Moein: It is needed. Please check my explanation.}
$\algkomod$ has the benefit of reducing the
  number of inputs exponentially at each round and therefore
  maintaining a linear query-complexity while having only a
  $3$-approximation guarantee. On the other side, $\algquickselect$ has a
  $2$-approximation guarantee while it may require $\cO(n^2)$ queries
  for some instances of inputs. In $\algcomb$ we combine the benefits
  of these algorithms and avoid their shortcomings. By carefully
  repeating $\algqssub$ we try to reduce the number of inputs by a
  fraction at each round and keep the largest element in the remained
  set. If the number of inputs is not reduced by a fraction, most of
  them must be close to each other, therefore repeating the
  $\algkosub$ for a sufficient number of times and 
  keeping the inputs with higher
  number of wins will guarantee the reduction of the input size
  without adversing the approximation error.
Our final algorithm $\algcomb$ provides a $2$-approximation of $\maxx$
even against the adaptive-adversarial comparator, and has linear query
complexity, therefore resolves an open question
of~\cite{AjtaiFHN15}.

\begin{algorithm}
\label{alg:qks}
\caption{$\algcomb$ - Knock-out and quick-select combination}
\textbf{input:} $\cX,\errorcomb$\\
\qquad $\beta_1=9$, $\beta_2=25$, $i=0$\\
\qquad \textbf{while} $|\cX|>1$	\\
\qquad\qquad $i=i+1$ \qquad\qquad($i$ is the round)\\
\qquad\qquad $n_i=|\cX|$\\
\qquad\qquad run $\cX\leftarrow \algqssub(\cX)$ for $\left\lfloor\beta_1 \log\tfrac1\errorcomb\right\rfloor$ times\\
\qquad\qquad $\cX_i=\cX$\\
\qquad\qquad \textbf{if} $|\cX_i|>\frac23 n_i$\\
\qquad\qquad\qquad run $\algkosub$ on \emph{fixed} $\cX$ for $\left\lfloor\beta_2 \left(\frac43\right)^i \log\tfrac1\errorcomb\right\rfloor$ times\\
\qquad\qquad\qquad \textbf{if} there exists an input with $>\frac34 \left\lfloor\beta_2 \left(\frac43\right)^i \log\tfrac1\errorcomb\right\rfloor$ wins\\
\qquad\qquad\qquad\qquad let $\cX$ be a multiset of inputs with $>\frac34 \left\lfloor\beta_2 \left(\frac43\right)^i \log\tfrac1\errorcomb\right\rfloor$ wins\\
\qquad\qquad\qquad \textbf{else} \\
\qquad\qquad\qquad\qquad let $\cX$ be an input with highest number of wins\\
\qquad \textbf {end while}\\
\textbf{output:} $\cX$
\end{algorithm}

We begin analysis of the algorithm with a few simple lemmas. 
\begin{lemma}
\label{lem:size-reduction}
At each round $|\cX|$ reduces at least by a third, \ie $n_{i+1}\le \frac23 n_i$.
\end{lemma}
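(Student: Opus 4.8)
The plan is to prove Lemma~\ref{lem:size-reduction} by a short case analysis that mirrors the structure of the \textbf{if} statement in $\algcomb$, splitting on whether the repeated calls to $\algqssub$ already shrink $\cX$ by a third.

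First I would record the elementary fact that a single call of $\algqssub$ never enlarges its input: its output is either the non-empty multiset of inputs that beat the pivot (of size at most $|\cX|-1$) or the singleton $\{x_p\}$. Hence, after the $\lfloor\beta_1\log\tfrac1\errorcomb\rfloor$ calls performed in round $i$, the multiset $\cX_i$ has size $m:=|\cX_i|\le n_i$. If $m\le\tfrac23 n_i$, then the test $|\cX_i|>\tfrac23 n_i$ fails, $\cX$ is left equal to $\cX_i$, and $n_{i+1}=m\le\tfrac23 n_i$, which is the claim.

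The remaining case is $m>\tfrac23 n_i$, in which the algorithm runs $\algkosub$ on the \emph{fixed} multiset $\cX_i$ exactly $K:=\lfloor\beta_2(\tfrac43)^i\log\tfrac1\errorcomb\rfloor$ times and records, for each input, how many of these runs it wins. The key ingredient here is a counting bound: in one call of $\algkosub$ the $m$ inputs are split into pairs (with at most one bye), so at most $\lfloor m/2\rfloor$ inputs record a win, and over the $K$ calls the total number of recorded wins is at most $K\lfloor m/2\rfloor\le Km/2$. I would then follow the two sub-cases of the algorithm. If some input has more than $\tfrac34 K$ wins, the new $\cX$ is the multiset of all such inputs, and since each of its $n_{i+1}=|\cX|$ members contributes strictly more than $\tfrac34 K$ to the overall win total, we get $n_{i+1}\cdot\tfrac34 K< Km/2$, hence $n_{i+1}<\tfrac{2m}{3}\le\tfrac23 n_i$; since $n_{i+1}$ is an integer this yields $n_{i+1}\le\tfrac23 n_i$. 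Otherwise the new $\cX$ is a single input, so $n_{i+1}=1\le\tfrac23 n_i$ because the \textbf{while} guard forces $n_i\ge 2$. In every case $n_{i+1}\le\tfrac23 n_i$.

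I do not expect a genuinely hard step: the argument is purely combinatorial and uses none of the specific values of $\beta_1,\beta_2,\errorcomb$. The one place that needs a little care is the parity of $m$ in the win count for $\algkosub$: if the unpaired input were counted as a win the bound would become $K\lceil m/2\rceil$, and to still conclude $n_{i+1}\le\tfrac23 n_i$ one would invoke the fact that at least one $\algqssub$ call strictly reduced the size, so $m\le n_i-1$ and $\tfrac23(m+1)\le\tfrac23 n_i$; under the natural convention that a bye records no win this point disappears entirely.
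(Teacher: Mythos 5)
Your proof is correct and follows essentially the same route as the paper: the paper also splits on whether $|\cX_i|\le\frac23 n_i$, and its one-line appeal to ``Markov's inequality'' is precisely your explicit count that the total number of wins over the $K$ runs of $\algkosub$ is at most $Km/2$, so fewer than $\frac{2m}{3}$ inputs can each exceed $\frac34 K$ wins. Your treatment of the bye/parity issue and of the \textbf{else} branch is a harmless elaboration of details the paper leaves implicit.
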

\begin{proof}
If at any round $|\cX_i|\le \frac{2}{3} n_i$, then the lemma holds and the algorithm does not call $\algkosub$. On the other hand, if $\algkosub$ is called, then by Markov's inequality at most $\frac23$rd of the inputs win more than $\frac34$th fraction of queries. As a result, at round $i$ at least $\frac13$rd of the inputs in $\cX$ will be eliminated.
\end{proof}
Recall that $\cX^*=\{x\in\cX:x\ge\maxx-1\}.$ The next lemma shows that
choosing inputs inside $\cX^*$ as a pivot, guarantees a
$2$-approximation of $\maxx$. The proof is similar to
Lemma~\ref{lem:quick} and is omitted.
\begin{lemma}
\label{lem:calltopivot}
If $\maxx\in \cX$, at a call to $\algqssub$ either $\maxx$ survives or a pivot from $\cX^*$ is chosen where in the later case, 
only inputs that are $2$-approximation of $\maxx$ will survive.
\end{lemma}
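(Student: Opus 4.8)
The plan is to follow the template of Lemma~\ref{lem:quick}, but now bookkeeping the single call to $\algqssub$ rather than the whole while-loop. Two elementary properties of the comparator~\eqref{eq:comp} drive everything: (i) if $|x_i-x_j|>1$ then $\cC(x_i,x_j)=\max\{x_i,x_j\}$, so a value more than $1$ below another can never beat it; and (ii) consequently, whenever $z$ beats $w$ in a comparison, i.e.\ $\cC(z,w)=z$, we must have $z\ge w-1$. Let $x_p$ denote the pivot that $\algqssub$ picks.

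First I would dispose of the case where the pivot lies outside $\cX^*$. If $x_p<\maxx-1$ then in particular $x_p\neq\maxx$, so the pair $(x_p,\maxx)$ is among the comparisons made, and since $|\maxx-x_p|>1$ property (i) forces $\cC(x_p,\maxx)=\maxx$. Hence $\maxx$ lies in the winner multiset $\cY$, which is therefore nonempty and returned as the output, so $\maxx$ survives --- the first alternative of the statement. Contrapositively, if $\maxx$ does \emph{not} survive the call, the pivot must satisfy $x_p\ge\maxx-1$, i.e.\ $x_p\in\cX^*$.

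It remains to check that, in this surviving-failure case, every output element is a $2$-approximation of $\maxx$. The output of $\algqssub$ is either the nonempty winner multiset $\cY$ or, if no input beat $x_p$, the singleton $\{x_p\}$. So any surviving value $z$ is either $z=x_p$, in which case $z=x_p\ge\maxx-1\ge\maxx-2$, or $z$ beat $x_p$, in which case property (ii) gives $z\ge x_p-1\ge\maxx-2$; either way $z\ge\maxx-2$, which is exactly the claimed conclusion. There is no real obstacle here; the only points requiring a little care are the degenerate branch $\cY=\emptyset$ and the observation that $\maxx$ itself may be the discarded pivot (still lying in $\cX^*$), both of which are covered by the case split above.
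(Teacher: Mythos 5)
Your proof is correct and is essentially the argument the paper has in mind: the paper omits the proof of this lemma, stating it is "similar to Lemma~\ref{lem:quick}," whose proof uses exactly your two observations (a pivot that eliminates $\maxx$ must satisfy $x_p\ge\maxx-1$, and every survivor is at least $x_p-1\ge\maxx-2$). Your write-up just makes the case analysis (pivot outside $\cX^*$ versus inside, and the $\cY=\emptyset$ branch) explicit.
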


We showed that at each round, $\algcomb$ reduces $|\cX|$ by at least a third. As a result, the number of inputs decreases exponentially and the total number of queries is linear in $n$.  We also show that if $\maxx$ is eliminated at some round, then at that round, an input from $\cX^*$ has been chosen as a pivot with high probability. Using Lemma~\ref{lem:calltopivot}, this implies that $\algcomb$ outputs a $2$-approximation of $\maxx$ with high probability.

\begin{theorem}
\label{thm:qks}
$q_n^{\algcomb,\adaptive}=\cO\left(n\log \tfrac1\errorcomb\right)$ and $\pe_n^{\algcomb,\adaptive}(2) < \errorcomb$.
\end{theorem}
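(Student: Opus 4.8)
The plan is to prove the two bounds separately, starting with the query bound, which is the easy half. By Lemma~\ref{lem:size-reduction} we have $n_{i+1}\le\tfrac23 n_i$, so $n_i\le(\tfrac23)^{i-1}n$ and the while loop runs for only $\cO(\log n)$ rounds. In round $i$ the $\lfloor\beta_1\log\tfrac1\errorcomb\rfloor$ calls to $\algqssub$ cost at most $n_i$ comparisons each, and, when the knock-out branch is taken, the $\lfloor\beta_2(\tfrac43)^i\log\tfrac1\errorcomb\rfloor$ calls to $\algkosub$ cost at most $\tfrac12 n_i$ each (counting wins is free). Summing over rounds, the total is at most $\beta_1\log\tfrac1\errorcomb\sum_i(\tfrac23)^{i-1}n+\tfrac{\beta_2}{2}\log\tfrac1\errorcomb\sum_i(\tfrac43)^i(\tfrac23)^{i-1}n$, and since $\sum_i(\tfrac23)^{i-1}$ and $\sum_i(\tfrac43)^i(\tfrac23)^{i-1}=\tfrac32\sum_i(\tfrac89)^i$ are both $\cO(1)$, the total is $\cO(n\log\tfrac1\errorcomb)$.

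For correctness I would write $\cX^*=\{x:x\ge\maxx-1\}$ and call a run \emph{safe} from the first moment a call to $\algqssub$ (in any round) picks a pivot $x_p\in\cX^*$: every survivor of that call beats $x_p$, hence is $\ge x_p-1\ge\maxx-2$, and every subsequent working multiset is a sub-multiset of it, so from then on every element is a $2$-approximation of $\maxx$; thus a safe run outputs a $2$-approximation. By Lemma~\ref{lem:calltopivot}, as long as the run is not yet safe, $\maxx$ is never discarded during $\algqssub$, and if $\maxx$ is never discarded at all then it is the final output. Hence, if the output is not a $2$-approximation, then the run never becomes safe and $\maxx$ is discarded during some $\algkosub$ invocation; so it suffices to bound, for each round $i$, the probability $p_i$ that round $i$ is reached with $\maxx$ present and the run not yet safe, the knock-out branch of round $i$ is entered, and $\maxx$ is not among the kept inputs of that knock-out phase, and then show $\sum_i p_i<\errorcomb$.

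To bound $p_i$ I would fix the round and let $\rho_i$ denote the fraction of the $n_i$ starting inputs that lie in $\cX^*$. If $\rho_i$ is below a small absolute constant: since the knock-out branch is entered, the post-quick-select multiset $\cX_i$ has $|\cX_i|>\tfrac23 n_i$, so the fraction of $\cX_i$ in $\cX^*$ is at most $\rho_i n_i/(\tfrac23 n_i-1)<\tfrac14$; each knock-out round pairs $\maxx$ with a uniformly random other element of $\cX_i$, which $\maxx$ beats unless it lies in $\cX^*$, so $\maxx$ wins each of the $S_i=\lfloor\beta_2(\tfrac43)^i\log\tfrac1\errorcomb\rfloor$ independent matches with probability $>\tfrac34$, and a Chernoff bound shows $\maxx$ scores more than $\tfrac34 S_i$ wins — hence is kept — except with probability $e^{-\Omega(S_i)}=e^{-\Omega((\tfrac43)^i\log\tfrac1\errorcomb)}$. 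If instead $\rho_i$ exceeds that constant, then a constant fraction of the starting multiset lies in $\cX^*$, and one shows that among the $\lfloor\beta_1\log\tfrac1\errorcomb\rfloor$ quick-select steps some pivot is drawn from $\cX^*$ — which makes the run safe, contradicting the definition of $p_i$'s event — except with probability $\errorcomb^{\Omega(1)}$. Either way $p_i\le e^{-\Omega((\tfrac43)^i\log\tfrac1\errorcomb)}+\errorcomb^{\Omega(1)}$, and the constants $\beta_1=9$, $\beta_2=25$ together with the $\tfrac23,\tfrac34,\tfrac43$ thresholds are tuned precisely so that each $p_i\le\errorcomb\cdot\gamma^i$ for a fixed $\gamma<1$, which gives $\sum_{i\ge1}p_i<\errorcomb$.

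The hard part will be the $\rho_i$-large case: although a single random quick-select pivot hits $\cX^*$ with constant probability at the start of a round, using a pivot whose value lies just below $\maxx-1$ lets the adversary discard all the elements of $\cX^*$ that are tied with it, so the $\cX^*$-fraction need not stay bounded below over the $\lfloor\beta_1\log\tfrac1\errorcomb\rfloor$ quick-select steps, and the naive union bound over steps breaks. Handling this should use the fact that entering the knock-out branch means quick-select \emph{failed} to shrink $\cX$ by a third, which forces the surviving multiset to be ``flat'' in value; the dichotomy to be made precise is that either a $\cX^*$-pivot is hit early (making the run safe) or the elements above $\maxx-1$ have already dwindled to a negligible fraction of $\cX_i$, putting us back in the Chernoff regime. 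Pinning down this trade-off, and checking that the constants make $\errorcomb^{\Omega(1)}$ and $e^{-\Omega((\tfrac43)^i\log\tfrac1\errorcomb)}$ small enough to beat the union bound over all $\Theta(\log n)$ rounds, is what I expect to be the crux.
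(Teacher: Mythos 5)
Your query-complexity argument and your overall reduction (a run is ``safe'' once some $\algqssub$ call picks a pivot from $\cX^*$; otherwise $\maxx$ can only be lost inside a knock-out phase) match the paper's proof. But the correctness half has a genuine gap, and it is exactly the one you flag as ``the crux'': splitting on $\rho_i$, the $\cX^*$-fraction of the $n_i$ inputs at the \emph{start} of round $i$, does not work, because (as you observe) pivots with values in $[\maxx-2,\maxx-1)$ can deplete $\cX^*$ during the quick-select phase without making the run safe. The paper's fix is to define the density with respect to the multiset $\cX_i$ obtained \emph{after} the $\lfloor\beta_1\log\tfrac1\errorcomb\rfloor$ calls to $\algqssub$, i.e.\ $\alpha_i=|\cX_i\cap\cX^*|/|\cX_i|$, and to split on $\alpha_i\gtrless\tfrac18$. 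Since $\cX$ only shrinks during the round, every intermediate working multiset contains $\cX_i$, so $|\cX\cap\cX^*|\ge|\cX_i\cap\cX^*|$ at every pivot draw; and when the knock-out branch is entered we have $|\cX_i|>\tfrac23 n_i$, so $|\cX|\le n_i<\tfrac32|\cX_i|$ throughout. Hence the $\cX^*$-density at \emph{every} pivot draw of that round is at least $\tfrac23\alpha_i$, and if $\alpha_i>\tfrac18$ the probability of never hitting $\cX^*$ is at most $(1-\tfrac1{12})^{8\log\tfrac1\errorcomb}<\errorcomb$. This monotonicity observation is the missing idea; your proposed dichotomy (``either a $\cX^*$-pivot is hit early or $\cX^*$ has dwindled inside $\cX_i$'') is the right intuition, but you leave it unproved, and it is precisely what the $\alpha_i$-based split delivers for free.

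Two further problems in the parts you do carry out. First, your Chernoff step is vacuous as stated: a $\cX^*$-fraction below $\tfrac14$ gives $\maxx$ a per-match win probability only $>\tfrac34$, which coincides with the keep-threshold $\tfrac34\lfloor\beta_2(\tfrac43)^i\log\tfrac1\errorcomb\rfloor$, so there is no gap to concentrate over; you need the fraction bounded by $\tfrac18$ (win probability $\ge\tfrac78$) so that $D(\tfrac34\|\tfrac78)>0$ drives the bound $\Pr(E_i')\le\errorcomb^{2(4/3)^i}$. Second, your union-bound structure $\sum_i p_i<\errorcomb$ cannot absorb the pivot-failure term: the number of $\algqssub$ calls per round is fixed at $\lfloor\beta_1\log\tfrac1\errorcomb\rfloor$ independent of $i$, so that term is $\errorcomb^{\Omega(1)}$ \emph{uniformly} in $i$ and does not decay like $\errorcomb\gamma^i$; summed over $\Theta(\log n)$ rounds it need not be $<\errorcomb$. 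The paper instead argues by cases on the whole run: if \emph{some} round has $|\cX_i|>\tfrac23 n_i$ and $\alpha_i>\tfrac18$, the single-round pivot bound $<\errorcomb$ is paid once; otherwise every knock-out phase has $\alpha_i\le\tfrac18$ and only the geometrically decaying Chernoff terms $\errorcomb^{2(4/3)^i}$ are summed. You should restructure your bookkeeping the same way.
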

\begin{proof}
We start by analyzing the query complexity of~$\algcomb$. By Lemma~\ref{lem:size-reduction},
\[
n_i\le n\cdot \left(\tfrac{2}{3}\right)^{i-1}.
\]
Therefore, the total number of queries at round $i$ is at most
\begin{align*}
n\left(\tfrac{2}{3}\right)^{i-1} \beta_1\log\tfrac1\errorcomb + \tfrac{n}{2} \left(\tfrac23\right)^{i-1}\beta_2 \left(\tfrac43\right)^i \log\tfrac1\errorcomb,
\end{align*}
where the first term is for calls to~$\algqssub$ and the second
term is for calls to~$\algkosub$. Adding the query complexity of all
the rounds,
\begin{align*}
 q_n^{\algcomb,\adaptive}&\le n\log\tfrac1\errorcomb\sum_{i=1}^\infty  \left(\beta_1\left(\tfrac{2}{3}\right)^{i-1}+\tfrac23 \beta_2 \left(\tfrac{8}{9}\right)^{i-1}\right)\\
 &\le n(3\beta_1+6\beta_2) \log\tfrac1\errorcomb\\
 &=\cO\left(n\log\tfrac1\errorcomb\right).
\end{align*}

We now analyze the approximation guarantee of~$\algcomb$.  We show that at least one of the following events happens with probability greater than $1-\errorcomb$.
\begin{itemize}
\item $\algcomb$ outputs $\maxx$.
\item An input inside $\cX^*$ is chosen as a pivot at some round.  
\end{itemize}

Let $\cX_i^*\df\cX_i \cap \cX^*$ and $\alpha_i\df\frac{|\cX_i^*|}{|\cX_i|}$.
We consider the following two cases separately.
\begin{itemize}
 \item \textbf{Case 1} There exists an $i$ such that $|\cX_i|>\frac23 n_i$ and $\alpha_i > \frac18$.
 \item \textbf{Case 2} For all $i$, either  $|\cX_i|\le\frac23 n_i$ or $\alpha_i \le \frac18$.
\end{itemize}

First we consider case 1. We show that in this case a pivot from
$\cX^*$ is chosen with probability $> 1-\errorcomb$.  Observe that at
round $i$, $|\cX|$ starts at $n_i < \frac{3}{2}|\cX_i|$ and gradually
decreases. On the other hand, in all the $\left\lfloor\beta_1
\log\tfrac1\errorcomb\right\rfloor$ calls to $\algqssub$,
$|\cX\cap\cX^*|$ is at least $|\cX_i^*|=\alpha_i|\cX_i|$. Therefore,
in all the calls to $\algqssub$ at round $i$,
\[
 \frac{|\cX\cap\cX^*|}{|\cX|}\ge
 \frac{\alpha_i|\cX_i|}{\frac{3}{2}|\cX_i|}=\frac23\alpha_i.
\]
Let $E$ be the event of not choosing a pivot from $\cX^*$ at round
$i$. As a result,
\begin{align*}
 \pr(E)&\le \left(1-\tfrac23\alpha_i\right)^{\left\lfloor\beta_1\log\tfrac1\errorcomb\right\rfloor}\\
 &\le \left(\tfrac{11}{12}\right)^{8\log\tfrac1\errorcomb}\\
 &<\errorcomb.
\end{align*}
Therefore, in case 1, with probability at least $1-\errorcomb$, a
pivot from $\cX^*$ is chosen.

We now consider the case 2. By Lemma~\ref{lem:calltopivot} during the
calls to $\algqssub$, either $\maxx$ survives or an input from $\cX^*$
is chosen as a pivot.  Therefore, we may only lose $\maxx$ without
choosing a pivot from $\cX^*$, if at some round $i$, $|\cX_i|>\frac23
n_i$ and $\maxx$ wins less than $\frac34$th of its queries during the
calls to $\algkosub$.

Recall that in case 2, if $|\cX_i|>\frac23 n_i$ then $\alpha_i\le
\frac18$. Observe that $\maxx$ wins against a random input in $\cX_i$
with probability greater than $>1-\alpha_i$ which is at least
$\frac78$. Let $E'_i$ be the event that $\maxx$ wins fewer than
$\frac34$th of its queries at round $i$. By the Chernoff bound,
\begin{align*}
 \pr(E'_i)&\le \exp\left(-\left\lfloor\beta_2 \left(\tfrac43\right)^i \log\tfrac{1}{\errorcomb}\right\rfloor\cdot D\left(\tfrac34||\tfrac78\right) \right)\\
 &\le \errorcomb^{2\left(\tfrac43\right)^i},
\end{align*}
where $D(p||q) \df p \ln \frac{p}{q} + (1-p) \ln \frac{1-p}{1-q}$ is
the Kullback-Leibler distance between Bernoulli distributed random
variables with parameters $p$ and $q$ respectively. Assuming
$\errorcomb<\frac12$, the total probability of missing $\maxx$ without
choosing a pivot form $\cX^*$ is at most
\begin{align*}
 \sum_{i=1}^{\infty} \Pr(E'_i) &\le \sum_{i=1}^{\infty} \errorcomb^{2\left(\tfrac43\right)^i}\\ 
 &< \errorcomb.
\end{align*}

So far we showed that with probability $> 1-\errorcomb$, either
$\maxx$ survives or an input inside $\cX^*$ is chosen as a pivot. The
theorem follows from Lemma~\ref{lem:calltopivot}.
\end{proof}

\section{Application to density estimation}
\label{sec:densityestimation}

Our study of maximum selection with adversarial comparators was
motivated by the following density estimation problem.

Given a \emph{known} set $\cP_{\deltacover}=\{p_1,\ldots,p_n\}$ of $n$
distributions and $k$ samples from an \emph{unknown} distribution
$p_0$, output a distribution $\hat{p} \in \cP_{\deltacover}$ such that
for a small constant $C>1$ and with high probability,

\[
\|\hat{p}-p_0\|_1\le C\cdot \min_{p \in \cP_{\deltacover}} \|p-p_0\|_1+o_k(1).
\]

\noindent This problem was studied in~\cite{DevroyeL01} who showed
that for $n=2$, the $\algscheffetest$, described below in pseudocode, takes
$k$ samples and with probability $1-\errordens$ outputs a distribution
$\hat{p}\in \cP_{\deltacover}$ such that
 \begin{equation}
 \label{eqn:scheffetest}
  \norm{\hat{p}-p_0}_1 \leq 3 \cdot \min_{p \in \cP_{\deltacover}} \norm{p-p_0}_1 + \sqrt{\frac{10\log \frac{1}{\errordens}}{k}}.
 \end{equation}
\begin{algorithm}
\caption{$\algscheffetest$- Scheffe test for two distributions}
\textbf{input:} distributions $p_1$ and $p_2$, $k$ $\iid$ samples of unknown distribution $p_0$\\
\qquad let $\cS=\{x:p_1(x)>p_2(x)\}$\\
\qquad let $p_1(\cS)$ and $p_2(\cS)$ be the probability mass that $p_1$ and $p_2$ assign to $\cS$\\ 
\qquad let $\mu_\cS$ be the frequency of samples in $\cS$\\

\textbf{output:} if $|p_1(\cS)-\mu_\cS|\le|p_2(\cS)-\mu_\cS|$ output $p_1$, otherwise output $p_2$ 
\end{algorithm}

$\algscheffetest$ provides a factor-$3$ approximation with high
probability.  The algorithm, as stated in its pseudocode, requires
computing $p_i(\cS)$ which can be hard since the distributions are not
restricted.  However, as noted in~\cite{AcharyaJOS14}, the algorithm
can be made to run in time linear in $k$. % even when we get access to
the probability of the samples under the known distributions.
\cite{DevroyeL01} also extended $\algscheffetest$ for $n>2$. Their
proposed algorithm for $n>2$, runs $\algscheffetest$ for each pair of
distributions in $\cP_{\deltacover}$ and outputs the distribution with
maximum number of wins, where a distribution is a winner if it is the
output of $\algscheffetest$. This algorithm is referred to as the
Scheffe tournament.  They showed that this algorithm finds a
distribution $\hat{p}\in \cP_{\deltacover}$ such that
\[
 \norm{\hat{p}-p_0}_1\le 9 \min_{p \in \cP_{\deltacover}} \norm{p-p_0}_1+o_k(1), 
\]
and the running time is clearly $\Theta(n^2k)$, quadratic in the number
of distributions. 

\cite{MahalanabisS08} showed that the optimal coefficients for the
Scheffe algorithms are indeed 3 and 9 for $n=2$ and $n>2$
respectively.
They proposed an algorithm with an improved factor-$3$ approximation for
$n>2$, however still running in time $\Theta(n^2)$. They
also proposed a linear-time algorithm, but it requires a
preprocessing step that runs in time exponential in $n$. 

Scheffe's method has been used recently to
obtain nearly sample optimal algorithms for learning Poisson Binomial
distributions~\cite{DaskalakisDS12}, and
Gaussian mixtures~\cite{DaskalakisK13, AcharyaJOS14}. 

We now describe how our noisy comparison model can be applied to this
problem to yield a linear-time algorithm with the same estimation
guarantee as Scheffe tournament. Our algorithm uses Scheffe test as a
subroutine. Given a sufficient number of samples, $k=\Theta(\log n)$,
the small term in the RHS of~\eqref{eqn:scheffetest} vanishes and
$\algscheffetest$ outputs
\[
\left\{
\begin{array}{ll}
p_i & \text{if } \norm{p_i-p_0}_1 <\frac13 \norm{p_j-p_0}_1,\\
p_j & \text{if } \norm{p_j-p_0}_1 <\frac13 \norm{p_i-p_0}_1,\\
\textnormal{unknown} & \text{otherwise.} 
\end{array} \right.
\]
Let $x_i=-\log_3\norm{p_i-p_0}_1$, then analogously to the maximum
selection with adversarial noise in~\eqref{eq:comp}, $\algscheffetest$
outputs
\[
\left\{
\begin{array}{ll}
\max\{x_i,x_j\} & \text{if } |x_i-x_j|>1,\\
\textnormal{unknown} & \text{otherwise.} 
\end{array} \right.
\]
Given a fixed multiset of samples the tournament results are fixed,
hence this setup is identical to the non-adaptive adversarial
comparators.  In particular, with probability $1-\errordens$, our
quick-select algorithm can find $\hat{p}\in \cP_{\deltacover}$ such
that
\[
 \norm{\hat{p}-p_0}_1\le 9\cdot \min_{p \in \cP_{\deltacover}} \norm{p-p_0}_1,
\]
with running time $\Theta(nk)$. Next, we consider the combination of $\algscheffetest$ and $\algquickselect$ in more details.
\begin{theorem}
 Combination of $\algscheffetest$ and $\algquickselect$ algorithms,
 with probability $1-\errordens$, results in $\hat{p}$ such that
\[
 \norm{\hat{p}-p_0}_1\le 9\cdot \min_{p \in \cP_{\deltacover}}
 \norm{p-p_0}_1+4\sqrt{\frac{10\log \frac{\binom n2}{\errordens}}{k}}.
\]
\end{theorem}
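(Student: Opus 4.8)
The plan is to reduce everything to the exact‑comparator analysis of $\algquickselect$ in Lemma~\ref{lem:quick}, paying for the imprecision of $\algscheffetest$ through a single union bound. First I would run every invocation of $\algscheffetest$ with confidence parameter $\errordens/\binom n2$ in place of $\errordens$. By~\eqref{eqn:scheffetest}, a single such test on a pair $(p_i,p_j)$ then outputs, with probability at least $1-\errordens/\binom n2$, a distribution $p_w$ with
\[
\norm{p_w-p_0}_1\le 3\min\{\norm{p_i-p_0}_1,\norm{p_j-p_0}_1\}+\gamma,\qquad\text{where }\gamma\df\sqrt{\frac{10\log\frac{\binom n2}{\errordens}}{k}}.
\]
Since all $k$ samples are drawn once at the start and $\algquickselect$ performs at most $\binom n2$ comparisons in total --- each round compares a pivot against the remaining inputs and removes at least one of them, so the successive round sizes are bounded by $n,n-1,\dots,1$ --- a union bound over all $\binom n2$ pairs shows that, with probability at least $1-\errordens$, the displayed inequality holds simultaneously for every pair that $\algquickselect$ ever queries. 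Call this event $G$; everything below is deterministic on $G$ and does not use the internal randomness of $\algquickselect$.

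Next I would replay the argument of Lemma~\ref{lem:quick}. Let $p^*$ attain $\min_{p\in\cP_\deltacover}\norm{p-p_0}_1$ and abbreviate $\mathsf{OPT}=\norm{p^*-p_0}_1$; note that, in contrapositive form, the inequality above says that if $p_a$ loses a test to $p_b$, then $\norm{p_b-p_0}_1\le 3\norm{p_a-p_0}_1+\gamma$. If $p^*$ survives to the end, then $\hat p=p^*$ and $\mathsf{OPT}\le 9\,\mathsf{OPT}+4\gamma$, so assume $p^*$ is eliminated at some round. If $p^*$ was eliminated while serving as the pivot, the survivors of that round are exactly the distributions that beat $p^*$, and each of them is within $3\,\mathsf{OPT}+\gamma$ of $p_0$ (using that $\mathsf{OPT}$ is the global minimum, so the $\min$ in the guarantee equals $\mathsf{OPT}$). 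If instead $p^*$ was eliminated by losing to a pivot $p_c$, then $\norm{p_c-p_0}_1\le 3\,\mathsf{OPT}+\gamma$ on $G$, and the survivors of that round are either $\{p_c\}$ or exactly the distributions that beat $p_c$, each of which is within $3\norm{p_c-p_0}_1+\gamma\le 9\,\mathsf{OPT}+4\gamma$ of $p_0$. In every case, therefore, \emph{every} survivor of the elimination round is within $9\,\mathsf{OPT}+4\gamma$ of $p_0$.

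To finish, I would invoke the same structural fact used in Lemma~\ref{lem:quick}: $\algqssub$ always returns a subset of its input, so the surviving set only shrinks from the elimination round onward and the final output $\hat p$ lies in it; hence $\norm{\hat p-p_0}_1\le 9\,\mathsf{OPT}+4\gamma$, which is the claimed bound. The comparison count and the substitution of $\gamma$ are routine; the one step that needs care is the chaining of the additive slack --- the $+\gamma$ term is paid at most twice in series (once from $p^*$ to the pivot $p_c$, once from $p_c$ to a distribution that beats it), giving $3\gamma+\gamma=4\gamma$ rather than an accumulation over all rounds, and it is exactly the nesting property of $\algqssub$ that stops later rounds from degrading the survivors further. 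A minor subtlety worth noting is that~\eqref{eqn:scheffetest} is a single‑test statement, so reusing the same samples across all $\binom n2$ tests is absorbed entirely into the union bound and requires no independence.
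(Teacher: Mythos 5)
Your proposal is correct and follows essentially the same route as the paper: apply the Scheffe-test guarantee~\eqref{eqn:scheffetest} at confidence $\errordens/\binom n2$, union bound over all pairs, and then chain the factor-$3$-plus-additive guarantee twice through the pivot that eliminates $p^*$, exactly as in the paper's adaptation of Lemma~\ref{lem:quick}. Your explicit case split (elimination as pivot versus elimination by a pivot) and the observation that the surviving set only shrinks merely spell out what the paper compresses into ``similar to Lemma~\ref{lem:quick}.''
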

\begin{proof}
Let
\[
 p^*\df\argmin_{p\in\cP_\deltacover}\norm{p-p_0}_1.
\]
Using~\eqref{eqn:scheffetest}, for each $p_i$ and $p_j$ in
$\cP_\deltacover$, with probability $1-\errordens/\binom n2$,
$\algscheffetest$ outputs $\hat{p}$ such that
\begin{align}
\label{eqn:ttt}
\norm{\hat{p}-p_0}_1 \le 3 \cdot \min_{p \in \{p_i,p_j\}}
\norm{p-p_0}_1 + \sqrt{\frac{10\log \frac{\binom n2}{\errordens}}{k}}.
\end{align}
%By the union bound, with probability $1-\errordens$, for all $p_i$ and
%$p_j$ in $\cP_\deltacover$, $\algscheffetest$ outputs
%\begin{align}
%\norm{\hat{p}-p_0}_1 \le 3 \cdot \min_{p \in \{p_i,p_j\}}
%\norm{p-p_0}_1 + \sqrt{\frac{10\log \frac{\binom n2}{\errordens}}{k}}.
%\end{align}
By the union bound~\eqref{eqn:ttt} holds for all $p_i$ and $p_j$
with probability at least $1-\errordens$. Similar to
Lemma~\ref{lem:quick}, if $p^*$ is eliminated, then at some round,
$\algquickselect$ has chosen $p'$ as a pivot such that
\[
\norm{p'-p_0}_1 \le 3 \cdot \norm{p^*-p_0}_1 + \sqrt{\frac{10\log \frac{\binom n2}{\errordens}}{k}}.
\]
Now after choosing $p'$ as a pivot, for any distribution $p''$ that survives, 
\begin{align*}
 \norm{p''-p_0}_1 &\le  3 \cdot \norm{p'-p_0}_1 + \sqrt{\frac{10\log \frac{\binom n2}{\errordens}}{k}} \\
 &\le 9 \cdot \norm{p^*-p_0}_1 + 4 \sqrt{\frac{10\log \frac{\binom n2}{\errordens}}{k}}.
\end{align*}
\end{proof}

\section{Noisy sorting}
\label{sec:sorting}
\subsection{Problem statement}
We now consider sorting with noisy comparators. The comparator model
is the same as before, and the goal is to approximately sort the
inputs in decreasing order.

Consider an Algorithm $\cA$ for sorting the inputs. The output of
$\cA$ is denoted by $\bY_\cA(\cX)\df(Y_1,Y_2,\ldots,Y_n)$ which is a
particular ordering of the inputs.
Similar to the maximum-selection problem, a
$t$-approximation error is
\[
 \pe_{n}^{\cA}(t) \df \pr\left(\max_{i,j:i>j}(Y_i-Y_j) > t\right),
\]
$\ie$ the probability of $Y_i$ appearing after $Y_j$ in $\bY_\cA$
while $Y_i-Y_j > t$. Note that our definitions for
$\pe_n^{\cA,\nonadaptive}(t)$, $\pe_n^{\cA,\adaptive}(t)$,
$q_n^{\cA,\adaptive}$, and $q_n^{\cA,\nonadaptive}$ hold same as
before.

In the following, we first revisit complete tournament with a small
modification for the sake of sorting problem and we show that under
adaptive adversarial model, it has zero $2$-approximation error and
query complexity of $\binom{n}{2}$. Then we discuss quick-sort
algorithm $\algquicksort$ and show that it has zero $2$-approximation
error but with improved query complexity for the non-adaptive
adversary. We apply the known bounds for running time of general
quick-sort algorithm with $n$ distinct inputs to find the query
complexity of $\algquicksort$.

\subsection{Complete tournament}
The algorithm is similar to $\algcomp$ in
Section~\ref{subsec:complete} and we refer to it as
$\algcompsort$. The only difference is in the output of the algorithm.
\begin{algorithm}
\caption{$\algcompsort$ - Complete tournament} \textbf{input:}
$\cX$\\ \qquad compare all input pairs in $\cX$, count the number of
times each input wins\\ \textbf{output:} output the inputs in the
order of their number of wins, breaking the ties randomly
\end{algorithm}

The following lemma and its proof is similar to
Lemma~\ref{lem:complete-tournament} and therefore we skip the proof.
\begin{lemma}
 \label{lem:complete-tournamentsort}
  $q_n^{\algcompsort,\adaptive}=\binom{n}{2}$ and $\pe_n^{\algcompsort,\adaptive}(2)=0$.
\end{lemma}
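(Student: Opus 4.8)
The plan is to follow the proof of Lemma~\ref{lem:complete-tournament} almost verbatim, since $\algcompsort$ performs exactly the same $\binom n2$ comparisons and differs only in how it uses the resulting win counts. The query-complexity claim is immediate: the algorithm compares every one of the $\binom n2$ pairs exactly once, so $Q_n^{\algcompsort}=\binom n2$ deterministically, against both the non-adaptive and the adaptive comparator, giving $q_n^{\algcompsort,\adaptive}=\binom n2$.

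For the approximation claim I would show that the bad event simply never occurs on any run, so its probability is $0$ for every adversary and every realization of the random tie-breaking. Suppose, towards a contradiction, that on some run the output $\bY_{\algcompsort}(\cX)=(Y_1,\ldots,Y_n)$ satisfies $Y_i-Y_j>2$ for some $i>j$; let $a$ be the index realizing $Y_i$ and $b$ the index realizing $Y_j$, so $x_a-x_b>2$ while $a$ is listed after $b$. The key observation is a containment between win-sets. If index $b$ wins over some index $z$, i.e. $\cC(x_b,x_z)=x_b$, then by~\eqref{eq:comp} we must have $x_z\le x_b+1$ (otherwise $|x_b-x_z|>1$ forces the comparator to output the larger value $x_z$). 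Hence $x_a-x_z\ge(x_a-x_b)-1>1$, so $\cC(x_a,x_z)=\max\{x_a,x_z\}=x_a$; that is, $a$ also wins over $z$, and $z\ne a$ since $x_z<x_a$. Moreover $x_a-x_b>1$ gives $\cC(x_a,x_b)=x_a$, so $a$ wins over $b$ as well, while $b$ never "wins over" itself. Therefore the set of indices beaten by $b$ is a strict subset of the set beaten by $a$, and $a$ has strictly more wins than $b$.

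Because the win counts of $a$ and $b$ are not tied, the random tie-breaking plays no role, and $\algcompsort$ must list $a$ strictly before $b$, contradicting $i>j$. Hence no such pair $(i,j)$ exists on any run, so $\pe_n^{\algcompsort,\adaptive}(2)=0$. The only point needing slight care — and the reason the bound is $0$ rather than merely "with high probability" — is exactly this observation that whenever $x_a-x_b>2$ the comparison already separates their win counts strictly, so the tie-breaking randomness cannot manufacture a violation; everything else is the same threshold bookkeeping used in Lemma~\ref{lem:complete-tournament}.
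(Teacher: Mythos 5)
Your proof is correct and is exactly the argument the paper intends: the paper omits the proof, noting only that it is "similar to Lemma~\ref{lem:complete-tournament}," and your generalization of that lemma's win-set containment from the pair $(\maxx,y)$ to an arbitrary pair $a,b$ with $x_a-x_b>2$ (plus the observation that the strict win-count gap makes the random tie-breaking irrelevant) is precisely that similar argument. The query-complexity claim is likewise immediate, as you say.
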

Next, we discuss an algorithm with improved query complexity.

\subsection{Quick-sort}
Quick-sort is a well known algorithm and here is denoted by
$\algquicksort$. The expected query complexity of quick-sort with
noiseless comparisons and distinct inputs is
\begin{equation}
\label{eqn:quicksortfinal}
  f(n)\df2n\ln n-(4-2\gamma)n+2\ln n+\cO(1),
\end{equation}
where $\gamma$ is Euler's constant \cite{McdiarmidH96}. Note that
$f(n)$ is a convex function of $n$.

In the rest of this section we study the error guarantee of quick-sort
and its query complexity in the presence of noise. In
Lemma~\ref{lem:quicksort} we show that the error guarantee of
quick-sort for our noise model is same as complete tournament, $\ie$
it can sort the inputs with zero $2$-approximation error. Next in
Lemma~\ref{lem:expectedquicksort} we show that the expected query
complexity of quick-sort with non-adaptive adversarial noise is at
most its expected query complexity in noiseless model.

\begin{lemma}
 \label{lem:quicksort}
  $\pe_n^{\algquicksort,\adaptive}(2)=0$.
\end{lemma}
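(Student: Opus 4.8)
The plan is to prove something slightly stronger than stated — that $\algquicksort$ makes \emph{no} $2$-violation on \emph{any} run, for every sequence of pivot choices and every (adaptively) chosen set of comparator answers; since the bad event then has probability $0$ for every adversary in $\adaptivecomp$, we get $\pe_n^{\algquicksort,\adaptive}(2)=0$.

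First I would fix notation for one recursive call of $\algquicksort$ on a sub-multiset $S$: a pivot $x_p\in S$ is drawn, each $z\in S\setminus\{x_p\}$ is compared to $x_p$, and $S$ is split into the winners $W=\{z:\cC(z,x_p)=z\}$ and the losers $L=\{z:\cC(z,x_p)=x_p\}$; the call outputs the recursively-sorted $W$, then $x_p$, then the recursively-sorted $L$. The only property of $\cC$ I need is that a ``wrong'' comparison is off by at most $1$: every $z\in W$ satisfies $z\ge x_p-1$ (otherwise $|z-x_p|>1$ would force $\cC(z,x_p)=x_p$), and symmetrically every $z\in L$ satisfies $z\le x_p+1$.

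For the main argument, take any two elements $u,v$ of $\cX$ and suppose $u$ precedes $v$ in $\bY_{\algquicksort}(\cX)$; I want to show $v\le u+2$. Tracing $u$ and $v$ down the recursion tree, they start together in the root call and stay together exactly as long as neither is the chosen pivot and both fall on the same side ($W$ or $L$) of the split. Let the \emph{deciding call} be the first call where this fails; it exists because the recursion terminates, and once it occurs the relative order of $u$ and $v$ in $\bY$ is frozen by the ``$W$, then pivot, then $L$'' concatenation. At the deciding call, with pivot $x_p$, there are three cases. If $u=x_p$, then ``$u$ before $v$'' forces $v\in L$, so $v\le u+1$. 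If $v=x_p$, then ``$u$ before $v$'' forces $u\in W$, so $u\ge v-1$, that is $v\le u+1$. If neither is the pivot, then ``$u$ before $v$'' forces $u\in W$ and $v\in L$, so $v\le x_p+1\le(u+1)+1=u+2$. In all cases $v\le u+2$; pairs of equal value are trivial. Hence for any output positions $j<i$, $Y_i-Y_j\le 2$.

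The argument has no genuinely hard step; the only place that needs care is the bookkeeping around the deciding call — that it always exists, that $u$ and $v$ really do travel together up to it, and that afterwards their order in $\bY$ is determined and agrees with whichever of the three cases occurred. Once that is settled, the three-way inequality is immediate from the single comparator fact and holds on every execution, so the $2$-approximation error is $0$ even against the adaptive adversary.
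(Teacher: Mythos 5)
Your proof is correct and follows essentially the same route as the paper's: both arguments reduce to the fact that when two elements are separated by a pivot $x_p$, the comparator's possible error of at most $1$ on each of the two comparisons bounds their gap by $2$. Your write-up is merely more explicit about the existence of the deciding call and the cases where one of the two elements is itself the pivot, which the paper's two-line contradiction argument leaves implicit.
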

\begin{proof}
 The proof is by contradiction. Suppose $x_i>x_j+2$ but $x_j$ appears
 before $x_i$ in the output of quick-sort algorithm. Then there must
 have been a pivot $x_p$ such that $\cC(x_i,x_p)=x_p$ while
 $\cC(x_j,x_p)=x_j$. Since $x_i>x_j+2$ no such a pivot exists.
\end{proof}

Quick-sort algorithm chooses a pivot randomly to divide the set of
inputs into smaller-size sets. The optimal pivot for noiseless
quick-sort is known to be the median of the inputs to balance the size
of the remained sets. In fact, it is easy to show that if we choose
the median of the inputs as pivot, the query complexity of quick-sort
reduces to less than $n\log n$. Observe
that in a non-adaptive adversarial model, the probability of having
balanced sets after choosing pivot increases. As a result, in the next
lemma we show that the expected query complexity of quick-sort in the
presence of noise is upper bounded by $f(n)$.
\begin{lemma}
\label{lem:expectedquicksort}
 $q_n^{\algquicksort,\nonadaptive}=f(n)$ and is achieved when
the queries are noiseless and inputs are distinct.
\end{lemma}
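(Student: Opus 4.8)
The plan is to prove $q_n^{\algquicksort,\nonadaptive}\le f(n)$ by strong induction on $n$ and then exhibit a noiseless instance attaining $f(n)$. For the upper bound, fix a non-adaptive comparator $\cC$ and an input $\cX$ with $|\cX|=n\ge 2$, and view $\cC$ as its complete directed graph on $\cX$ (edge $i\to j$ when $\cC(x_i,x_j)=x_i$). When $\algquicksort$ picks element $p$ as its first pivot it spends $n-1$ queries comparing $p$ to everyone else, and then recurses on the two parts, which have sizes $a_p$ (the in-degree of $p$) and $n-1-a_p$ (its out-degree); crucially, the comparator induced on each part is again non-adaptive, and each part has fewer than $n$ elements. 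Conditioning on the uniformly random first pivot and applying the induction hypothesis to each recursive call gives
\[
q_n^{\algquicksort}\le (n-1)+\frac1n\sum_{p=1}^{n}\bigl(f(a_p)+f(n-1-a_p)\bigr).
\]

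Set $g(d)=f(d)+f(n-1-d)$; since $f$ is convex (as noted after~\eqref{eqn:quicksortfinal}), $g$ is convex, and it is symmetric about $d=(n-1)/2$. The combinatorial heart of the proof is a classical fact about tournament score sequences (essentially Landau's theorem): the in-degree sequence $(a_1,\dots,a_n)$ of any tournament on $n$ vertices is majorized by $(0,1,\dots,n-1)$. This is self-contained: if $v_1,\dots,v_k$ are the $k$ vertices of smallest in-degree, the sub-tournament they induce has $\binom k2$ edges, each contributing $1$ to the in-degree (in the whole tournament) of one of them, so $\sum_{i=1}^{k} a_{v_i}\ge\binom k2=\sum_{i=0}^{k-1} i$; together with $\sum_i a_i=\binom n2$ this is exactly the majorization condition. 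The majorization inequality for the convex function $g$ then yields $\sum_{p}g(a_p)\le\sum_{j=0}^{n-1}g(j)=2\sum_{j=0}^{n-1}f(j)$. Substituting into the displayed bound and using the quick-sort recursion $f(n)=(n-1)+\frac2n\sum_{j=0}^{n-1}f(j)$ (with $f(0)=f(1)=0$) gives $q_n^{\algquicksort}\le f(n)$, completing the induction; the base cases $n\le1$ are trivial.

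For the matching lower bound, take $\cX=\{0,2,4,\dots,2(n-1)\}$, all of whose pairwise gaps exceed the threshold $\thrr=1$. Then every comparison is forced to return the true maximum, so $\algquicksort$ runs exactly like ordinary quick-sort on $n$ distinct keys, whose expected number of comparisons is $f(n)$; since this forced comparator is a (degenerate) member of $\nonadaptivecomp$, we get $q_n^{\algquicksort,\nonadaptive}\ge f(n)$. Combining the two bounds proves the equality and shows it is attained precisely in the noiseless, distinct-input regime.

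The only genuinely non-routine step is the second paragraph: recognizing that the adversary's freedom over near-pivot comparisons can only rearrange the pivot's in-degree, that the transitive tournament (the noiseless case) is the unique top element of the majorization order on score sequences, and hence that convexity of $f$ forces the noiseless instance to be the worst case. The recursion, the convexity of $g$, and the base cases are routine bookkeeping.
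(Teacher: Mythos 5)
Your proof is correct and follows the same overall strategy as the paper: condition on the random pivot, apply the induction hypothesis to the two induced sub-instances (which remain non-adaptive), and use convexity of $f$ to argue that the transitive tournament --- equivalently, noiseless comparisons on distinct inputs --- is the worst case, with the matching lower bound supplied by an input whose pairwise gaps exceed the threshold. Where you genuinely go beyond the paper is in justifying the maximization step. The paper asserts that, since $f$ is convex and the in- and out-degrees sum to $\binom{n}{2}$, the right-hand side of the recursion is maximized when the degrees form a permutation of $(0,1,\ldots,n-1)$; but convexity plus a fixed sum only places the maximum at an extreme point of the polytope of \emph{achievable} score sequences, and one must still identify that polytope. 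Your argument supplies exactly this: every tournament score sequence is majorized by $(0,1,\ldots,n-1)$ (the easy direction of Landau's theorem, which you prove self-containedly by counting the $\binom{k}{2}$ edges inside the sub-tournament on the $k$ lowest-in-degree vertices), and Karamata's inequality applied to the convex function $g(d)=f(d)+f(n-1-d)$ then yields the bound. In short: same route and same key quantities, but your write-up closes a real gap in the paper's justification of its central inequality.
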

\begin{proof}
 Let $\din(x)$ and $\dout(x)$ be the in-degree and out-degree of node
 $x$ in the complete tournament respectively. For the noiseless
 comparator with distinct inputs, the in-degrees and out-degrees of
 inputs are permutation of $(0,1,\ldots,n-1)$. We show that
 \[
   \argmax_{\cC\in\nonadaptivecomp}\max_{\cX} q_n^{\algquicksort},
 \]
 is a comparator whose complete tournament in-degrees and
 out-degrees are permutations of $(0,1,\ldots,n-1)$. For the simplicity
 of notation let $q_n=q_n^{\algquicksort,\nonadaptive}$. We have the
 following recursion for quick-sort similar
 to~\eqref{eqn:lem_expected}.
 \begin{align}
 \label{eqn:quicksort}
 q_n\le n-1+\frac{1}{n}\sum_{i=1}^n q_{\dout(x_i)}+q_{\din(x_i)}
\end{align}
By induction, we show that the solution to~\eqref{eqn:quicksort} is
bounded above by $f(n)$, a convex function of $n$. The induction holds
for $n=0,1,$ and $2$. Now suppose the induction holds for all
$i<n$. Since $f(n)$ is a convex function of $n$ and $\sum_i
\din(x_i)=\sum_i \dout(x_i)=\frac{n(n-1)}{2}$, the right hand side
of~\eqref{eqn:quicksort} is maximized when the in-degrees and
out-degrees take their extreme values, $\ie$ when they are permutation
of $(0,1,\ldots,n-1)$. Plugging in these values,~\eqref{eqn:quicksort}
is equivalent to,
\begin{align*}
 q_n&\le n-1+\frac{1}{n}\sum_{i=1}^n f(\din(x_i))+f(\dout(x_i))\\
    &\le n-1+\frac{1}{n}\sum_{i=1}^n f(i-1)+f(n-i),
\end{align*}
where the solution to this recursion is $f(n)$, given
in~\eqref{eqn:quicksortfinal}.  Hence $q_n$ is bounded above by $f(n)$
and the equality happens when the in-degrees and out-degrees are
permutations of $(0,1,\ldots,n-1)$.
\end{proof}

\cite{Donald99,Hennequin89,McdiarmidH92} show different concentration
bounds for quick-sort. In particular, \cite{McdiarmidH92} show that
the probability of quick-sort algorithm requiring more comparisons
than $(1+\errorquicksort)$ times its expected query complexity is
$n^{-2\errorquicksort\ln\ln n +\cO(\ln \ln \ln n)}$. Observe that for
the non-adaptive adversarial model, the chance of a random pivot
cutting the set of inputs into balanced sets increases. As a result,
one can show that the analysis in \cite{McdiarmidH92} follows
automatically. In particular, Lemmas~2.1 and~2.2 in
\cite{McdiarmidH92}, which are the basis of their analysis, are valid
for our non-adaptive adversarial model. Therefore, their tight
concentration bound for quick-sort algorithm can be applied to our
non-adaptive adversarial model. 
% \todo{Shall we propose an open
%  problem for noisy sorting with adaptive adversary?}
\paragraph{Acknowledgment}
We thank Jelani Nelson for introducing us to the problem's adaptive
adversarial model.

\bibliographystyle{alpha}
\bibliography{abr,masterref}

\appendix
\pagebreak
\section{For all $t<3$, $\algkomod$ cannot output a $t$-approximation}
\label{sec:app-example}
The next example shows that the modified knock-out algorithm cannot achieve
better than $3$-approximation of $\maxx$. 
\begin{example}
\label{exm:knockouterror}
Suppose $n-2$ is multiple of $3$ and $n$ is a large number. Let $\cX$ be a random permutation of
\[
 \{3,\underbrace{2,2,\ldots,2}_{\frac{n-2}{3}},\underbrace{1,1,\ldots,1}_{\frac{n-2}{3}},\underbrace{0,0,\ldots,0}_{\frac{n-2}{3}},0^*\}.
\]
This multiset consists of an input with value zero but specified with
$0^*$, since this input is going to behave differently from other
$0$s. Let the adversarial comparator be such that all $0$s, except
$0^*$, and all $2$s lose to all $1$s, and $3$ loses to all $2$s. By
the properties of comparator it is obvious that any $2$ will defeat
all zeros, including $0^*$. In order to prove our main claim, we make
the following arguments and show that each of them happens with high
probability.
\begin{itemize}
\item $\pr$(input with value $3$ is not present in the final multiset)$>\frac3{10}$.
\item  $\pr$(input $0^*$ is present in the final multiset)$>\frac13$.
\item With high probability, the fraction of $1$s in the final multiset is close to $1$.
\end{itemize}
Before proving each argument we show why all the above statements are sufficient to prove our claim. Consider the final multiset; with high probability it mainly consists of $1$s and there are small number of $0$s and $2$s. Moreover, with probability greater than $\frac1{3}\times \frac3{10}$, input with value $3$ has been removed before reaching the final multiset and $0^*$ has survived to reach the final multiset. Therefore, if we run Algorithm $\algcomp$ on the final multiset, the input $0^*$ will have the most number of wins and declared as the output of the algorithm. Hence for all $t<3$, we have $\pe_n^{\algkomod,\nonadaptive}(t)>\text{constant}$. Note that we did not try to optimize this constant.

Now we show why each of the arguments above are true. Note that all the discussions made below is in expected value. However, the concentration bounds for all these claims are straightforward but not discussed because it is out of the scope of this paper. Also we assume that $n$ is sufficiently large. 
\begin{lemma}
\label{lem:1frac}
With high probability, the fraction of $1$s in the final multiset is close to 1 and the fraction of $0$s and $2$s are very small.
\end{lemma}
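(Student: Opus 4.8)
The plan is to follow the evolution of the \emph{small} inputs — those with value $0$ (and distinct from $0^*$) or $2$, i.e.\ exactly the inputs the comparator declares to lose to \emph{every} input of value $1$ — through the $m=\lceil\log(n/n_1)\rceil$ rounds of $\algkomod$. Write $\cX_i$ for the multiset held by $\algkomod$ at the start of round $i$ and $\cY_{m+1}$ for the final contents of $\cY$, so the complete tournament is run on $M\df\cX_{m+1}\cup\cY_{m+1}$; put $n_i=|\cX_i|=n/2^{i-1}$, let $S_i$ be the set of small inputs in $\cX_i$, and $d_i=|S_i|/n_i$. Since the input and the comparator are fixed, the only randomness is that of $\algkomod$. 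As the non-small inputs are the $(n-2)/3$ ones together with the two singletons $0^*$ and $3$, we have $d_1=\tfrac23-O(1/n)$, and those singletons never affect any fraction to leading order. The goal is to bound $\EE\bigl[|\{x\in M: x\text{ small}\}|\bigr]$ and conclude by Markov's inequality.

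The key step is a one-round contraction for $d_i$. In a call to $\algkosub$ a matched pair produces a small winner only if \emph{both} of its elements are small: every input of value $1$ beats every $0$ and every $2$, and the only inputs beating a $1$, namely $0^*$ and $3$, are not small. Hence the number of small winners is at most the number of matched pairs contained in $S_i$, plus at most $2$ for the pairs meeting $0^*$ or $3$; this gives pointwise $|S_{i+1}|\le\tfrac12|S_i|+2$, so $d_{i+1}\le d_i+O(1/n_i)$ and, summing, $d_i\le\tfrac23+O(1/n_i)\le\tfrac34$ for all $i\le m$ once $n$ is large (recall $n_i>n_1=\Theta(\log n)$). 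Taking expectations of that pair count, $\EE[d_{i+1}\mid\cX_i]\le d_i^2+O(1/n_i)\le\tfrac34 d_i+O(1/n_i)$, whence $\EE[d_{i+1}]\le\tfrac34\EE[d_i]+O(1/n_1)$, and unrolling (using $d_1<1$) yields $\EE[d_i]\le(3/4)^{i-1}+O(1/\log n)$ for all $i\le m+1$.

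To pass to $M$, observe that each round appends exactly $n_1$ inputs to $\cY$, a $d_i$-fraction of which are small, so the expected number of small inputs ever put in $\cY$ is $n_1\sum_{i=1}^m\EE[d_i]\le n_1\bigl(\sum_{i\ge1}(3/4)^{i-1}+m\cdot O(1/\log n)\bigr)=O(n_1)$, using $m=\Theta(\log n)$. Adding the $\le|\cX_{m+1}|\le n_1$ inputs of the residual set and the two singletons, $\EE\bigl[|\{x\in M: x\text{ small}\}|\bigr]=O(n_1)$. Since $|M|\ge|\cY_{m+1}|=mn_1=\Theta(n_1\log n)$, this is $O(1/\log n)\cdot|M|$, so by Markov's inequality, with probability $1-o(1)$ the fraction of small inputs in $M$ — hence the fraction of $0$s and the fraction of $2$s — is $o(1)$, and the fraction of $1$s is $1-o(1)$, which is the claim.

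The one subtle point is the interplay between the recursion $d_{i+1}\le d_i^2+O(1/n_i)$ and the shrinking sets: once $n_i$ has fallen to $\mathrm{poly}(\log n)$, in the last $O(\log\log n)$ rounds, the additive error keeps $d_i$ from dropping below roughly $1/n_1$, so $d_i$ need not become ``very small'' there. This is harmless: those rounds contribute only $O(n_1\log\log n)=o(|M|)$ small inputs to $\cY$, and the final accounting divides the small-input count by $|M|=\Theta(n_1\log n)$ anyway. Consequently only the crude estimates above are needed; the routine per-round concentration (bounded differences for functions of a uniformly random perfect matching), needed only if one wants the per-round bounds to hold with high probability rather than just in expectation, is omitted, as is standard in this appendix.
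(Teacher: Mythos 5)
Your proof is correct, and it reaches the same core recursion as the paper but by a cleaner and more complete route. The paper tracks the three fractions $f_i(0),f_i(1),f_i(2)$ separately, derives the exact expected-value recursion $f_{i+1}(1)=2f_i(1)\bigl(1-\tfrac{f_i(1)}{2}\bigr)$ (equivalently $1-f_{i+1}(1)=(1-f_i(1))^2$, a doubly-exponential decay of the non-$1$ fraction), and then simply asserts that the concentration needed to turn these expectations into a high-probability statement is ``straightforward but not discussed.'' You instead lump the $0$s and $2$s into a single class of inputs that lose to every $1$, observe that a small winner can only come from a pair inside that class (plus the $O(1)$ pairs meeting $0^*$ and $3$), and get the one-sided recursion $\EE[d_{i+1}\mid\cX_i]\le d_i^2+O(1/n_i)$, which you linearize to geometric decay using the pointwise bound $d_i\le\tfrac34$. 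The real gain is the endgame: by summing the \emph{expected numbers} of small inputs deposited into $\cY$ and comparing against the deterministic lower bound $|\cY_{m+1}|=mn_1$, a single application of Markov's inequality yields the high-probability claim, so you never need the per-round concentration the paper hand-waves. You also correctly flag and dispose of the regime where $n_i$ has shrunk to $\mathrm{poly}(\log n)$ and the additive $O(1/n_i)$ terms stall the decay --- a point the paper's exact-recursion narrative glosses over. The only (immaterial) imprecision is counting the singletons $0^*,3$ as contributing $O(1)$ to $M$ when they can be copied into $\cY$ up to once per round, i.e.\ $O(\log n)$ times; this is still $o(|M|)$, so nothing changes.
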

\begin{proof}
We calculate the expected number of $0$s, $1$s, and $2$s at each step. Let $f_i(j)$ be the fraction of $j$'s at the end of step $i$. After each step, we lose an input with value $1$ if and only if they are paired with each other. So we have the following recursion,
\[
f_{i+1}(1)=2 \cdot f_i(1) \left(\tfrac{f_i(1)}{2}+1-f_i(1)\right),
\]
where the factor $2$ on the RHS of the recursion above is due to the
fact that at each step we are reducing the number of inputs to
half. Starting with $f_0(1)=1/3$, we get the set of values $\{1/3,
5/9, 65/81,6305/6561\sim 0.96, ... \}$ for $f_i(1)$s. We can see that
the ratio is approaching 1 very fast. More precisely, the fraction of
$0$s is decreasing quadratically since their only chance of survival
is to get paired among themselves. As a result, after a couple of
steps, the fraction of zeros is extremely small, and henceforth the
only chance of survival for $2$s becomes getting paired among
themselves and also their fraction is going to decrease quadratically
afterwards. As a result, more samples of $1$s will be in the final
$\cY$ with high probability.
\end{proof}
\begin{lemma}
\label{lem:rmv3}
\pr(input with value $3$ is not present in the final multiset)$>\frac3{10}.$
\end{lemma}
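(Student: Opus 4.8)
The plan is to get the bound already out of the \emph{first} knock-out round, which makes the argument very short; the behaviour of the later rounds (the content of Lemma~\ref{lem:1frac} and of the $0^*$-survival claim) only helps and is not needed here.

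The one structural fact I would record is about the comparator: since $|3-2|=1$, the adversary is free to --- and does --- declare every copy of $2$ to beat the input $3$, whereas every other value present ($0$, $0^*$, $1$) differs from $3$ by more than $1$ and so loses to it. Hence in any invocation of $\algkosub$ the input $3$ is eliminated \emph{exactly} when the random matching pairs it with some remaining $2$. Now look at round $1$: the first call to $\algkosub$ has $\cX_1=\cX$ with $|\cX_1|=n$, containing the single $3$ together with exactly $\tfrac{n-2}{3}$ copies of $2$; since $\algkosub$ uses a uniformly random perfect matching, $3$ is matched with a uniformly random one of the other $n-1$ inputs and is therefore eliminated in round $1$ with probability $\frac{(n-2)/3}{\,n-1\,}$. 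The round-$1$ copy step puts a uniformly random $n_1$-subset of $\cX_1$ into $\cY$, so $3$ is copied in round $1$ with probability $n_1/n$. If $3$ is eliminated in round $1$ and is \emph{not} copied in that round, then it is absent from $\cX$ at every later round (so it can never be copied into $\cY$ afterwards) and it is not in $\cY$ after round $1$ either; hence $3\notin\cX\cup\cY$ when $\algcomp$ is finally run, i.e.\ $3$ is absent from the final multiset.

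Putting these together (via $\pr(A\setminus B)\ge\pr(A)-\pr(B)$ with $A=\{3\text{ eliminated in round }1\}$ and $B=\{3\text{ copied in round }1\}$, so that the independence of the matching and the copy step is not even needed),
\[
\pr(3\text{ absent from the final multiset})\ \ge\ \frac{(n-2)/3}{\,n-1\,}-\frac{n_1}{n}.
\]
Since $n_1=\bigl\lceil\tfrac1\errorcomb\ln\tfrac1\errorcomb\cdot\log n\bigr\rceil$ satisfies $n_1/n\to 0$, the right-hand side tends to $\tfrac13$ as $n\to\infty$, and hence exceeds $\tfrac{3}{10}$ for all sufficiently large $n$ --- the regime assumed throughout the example. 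I do not expect any real obstacle; the only points deserving a line of care are that elimination in round $1$ together with not being copied in round $1$ genuinely precludes $3$ from the final multiset (there is no later chance for it to re-enter $\cY$), and the routine check that $n_1/n\to0$.
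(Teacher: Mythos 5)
Your proposal is correct and follows essentially the same route as the paper: the paper likewise bounds the probability by considering only the first round, computing the chance that $3$ is paired with one of the $\tfrac{n-2}{3}$ copies of $2$ and is not among the $n_1$ inputs copied to $\cY$, obtaining $\tfrac{n-2}{3n}\left(1-\tfrac{n_1}{n}\right)>\tfrac{3}{10}$ for large $n$. Your use of $\pr(A\setminus B)\ge\pr(A)-\pr(B)$ in place of the product is a trivially equivalent variation.
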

\begin{proof}
The input with value $3$ is going to be removed when it is compared
against one of the $2$s. There is a slight chance of surviving for it
if it is chosen randomly for being in the output. Thus the probability
of input $3$ being removed from the multiset in the first round is
\[
\pr\text{(input } 3 \text{ is being removed in the first round)}=\frac{n-2}{3n}\left(1-\frac{n_1}{n}\right) > \frac3{10},
\]
where $n_1 =\left\lceil\frac1\errorcomb\ln \frac1\errorcomb \log n \right\rceil$.
\end{proof}
\begin{lemma}
\label{lem:nrmv0*}
\pr(input $0^*$ is present in the final multiset)$>\frac13.$
\end{lemma}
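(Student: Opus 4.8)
The plan is to show that $0^*$ reaches the final multiset simply by surviving every call to $\algkosub$ while it sits in $\cX$; being copied into $\cY$ at some round would only help, so we will not use it. First fix the behaviour of $0^*$ that the example leaves open: since $|0-1|\le 1$ and $|0-0|\le 1$, the adversary may declare that $0^*$ beats every $1$ and beats every other $0$, while $0^*$ necessarily loses to every $2$ and to the $3$ because those differences exceed $1$. As there is only one copy of $0^*$, this convention perturbs the per-round survival counts of the $1$'s, the $2$'s and the $3$ by only $O(1/n)$, and so does not affect the asymptotic analyses in the proofs of Lemma~\ref{lem:1frac} and Lemma~\ref{lem:rmv3}. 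With it, inside a call to $\algkosub$ the input $0^*$ is paired with a uniformly random other current element of $\cX$ and is eliminated if and only if that partner is a $2$ or the $3$; if $0^*$ survives all the calls, it ends up in the final $\cX\subseteq\cX\cup\cY$.

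Next, bound the round-by-round survival probability. Let $N_i=n/2^{i-1}$ be the size of $\cX$ at the start of round $i$, and let $f_j(2)$ be the fraction of $2$'s among the survivors after $j$ rounds, which for large $n$ concentrates around its mean (the example's standing assumption) and obeys the same kind of recursion as the $1$-fractions computed in the proof of Lemma~\ref{lem:1frac}. Conditioned on $0^*$ still lying in $\cX$ at the start of round $i$, it is eliminated during that round with probability at most $f_{i-1}(2)+O(1/N_i)$, the $O(1/N_i)$ absorbing both the chance of meeting the lone surviving $3$ and the concentration slack. Since $1-f_{i-1}(2)\ge 2/3$ for every $i$, multiplying over the $m$ rounds gives
\[
\pr\bigl(0^*\text{ in the final multiset}\bigr)\ \ge\ \prod_{i=1}^{m}\left(1-f_{i-1}(2)-O(1/N_i)\right)\ \ge\ \prod_{i=1}^{m}\left(1-f_{i-1}(2)\right)\ -\ O\!\left(\sum_{i=1}^{m}\frac{1}{N_i}\right).
\]
The while loop stops once $|\cX|\le n_1$, so every $N_i>n_1$; hence $2^{m-1}<n/n_1$ and $\sum_{i=1}^{m}1/N_i=\sum_{i=1}^{m}2^{i-1}/n<2^m/n<2/n_1=o(1)$.

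It remains to lower bound $\prod_{i\ge 1}(1-f_{i-1}(2))$ by a constant strictly above $1/3$. The recursion gives $f_0(2)=f_1(2)=1/3$ and $f_2(2)=5/27$, and the fraction of $2$'s then decays doubly exponentially --- once the $0$'s have died out a surviving $2$ must have met another $2$ --- so that $\sum_{i\ge 3}f_i(2)<1/20$. Consequently
\[
\prod_{i\ge 1}\left(1-f_{i-1}(2)\right)\ \ge\ \left(\frac23\right)^{2}\cdot\frac{22}{27}\cdot\left(1-\frac1{20}\right)\ >\ \frac13 ,
\]
and combining with the $o(1)$ loss above yields $\pr(0^*\text{ in the final multiset})>1/3$ for all sufficiently large $n$, as claimed.

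The main obstacle is the middle step: justifying that the per-round survival events may be multiplied, i.e.\ that conditioning on ``$0^*$ has survived so far'' leaves the composition of $\cX$ --- in particular the fraction $f_i(2)$ of $2$'s --- essentially unchanged. This requires a concentration statement for the trajectory $\{f_i(2)\}$, exactly the kind of bound the example declares ``straightforward but out of scope.'' A secondary, purely bookkeeping difficulty is that the bound $1/3$ is cleared only narrowly (the true probability is $\approx 0.34$), so the estimates $f_0(2)=f_1(2)=1/3$, $f_2(2)=5/27$ and $\sum_{i\ge 3}f_i(2)<1/20$ have to be kept sharp rather than bounded crudely.
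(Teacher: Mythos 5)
Your proof follows the same route as the paper's: fix the adversary so that $0^*$ loses only to the $2$'s and the $3$, track the fraction of $2$'s via the recursion $f_{i+1}(2)=2f_i(2)\bigl(\tfrac{f_i(2)}{2}+1-f_i(2)-f_i(1)\bigr)$ giving $f_0(2)=f_1(2)=\tfrac13$, $f_2(2)=\tfrac5{27}$, and lower bound the product of per-round survival probabilities $(1-\tfrac13)(1-\tfrac13)(1-\tfrac5{27})\cdots>\tfrac13$. You are in fact somewhat more careful than the paper — explicit $O(1/N_i)$ corrections for the lone $3$ and for concentration slack, an explicit tail bound $\sum_{i\ge3}f_i(2)<\tfrac1{20}$, and an honest flagging of the conditioning/concentration step — which the paper itself declares straightforward and out of scope, so your argument is correct in the same sense and to the same standard as the paper's.
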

\begin{proof}
Similar to the argument made in the proof of Lemma~\ref{lem:1frac}, we have the following recursion for $f_i(2)$.
\[
f_{i+1}(2)=2 \cdot f_i(2) \left(\tfrac{f_i(2)}{2}+1-f_i(2)-f_i(1)\right)
\]
Thus we have $f_0(2)=1/3$, $f_1(2)=1/3$, $f_2(2)=5/27$,
$f_3(2)=85/2187$. As we stated in the proof of Lemma~\ref{lem:1frac},
the expected fraction of $2$s is decreasing quadratically and
\[
 \pr(0^*
 \text{surviving})=(1-\tfrac13)(1-\tfrac13)(1-\tfrac5{27})(1-\tfrac{85}{2187})\cdots
 >\frac13,
\]
proving the lemma. 
\end{proof}
\end{example}

\section{Proof of Lemma~\ref{lem:conselect}}
\label{sec:app-upp}
Abbreviate $Q_n^{\algquickselect}$ by $Q_n$. As in the Chernoff bound proof, for all $\lambda>0$,
 \begin{align}
  \label{eqn:chernoff}
  \pr(Q_n>kn) \le \frac{\EE[e^{\lambda Q_n}]}{e^{k\lambda n}}.
 \end{align}
 Let $\lambda=\frac1n \ln k'$ and $\Phi(i)\df\EE[e^{\lambda Q_i}]$. We
 prove by induction that $\Phi(i)\le e^{k'\lambda i}$. The induction
 holds for $i=0$. Similar to~\eqref{eqn:lem_expected}, we have the
 following recursion for $\Phi(n)$,
 \begin{align*}
  \Phi(n) &\le  \frac{e^{\lambda (n-1)}}{n} \sum_{j=1}^{n} \Phi(\din(x_j))\\
  &\le \frac{e^{\lambda n}}{n} \sum_{j=1}^{n} \Phi(\din(x_j)).
 \end{align*}
 Since $\din(x_j)<n$, using induction,
 \begin{align}
 \label{eqn:wff}
  \frac{e^{\lambda n}}{n} \sum_{j=1}^{n} \Phi(\din(x_j)) \le \frac{e^{\lambda n}}{n} \sum_{j=1}^{n} e^{k'\lambda \din(x_j)}.
 \end{align}
Observe that $e^{k'\lambda \din(x_j)}$ is a convex function of
$\din(x_j)$ and $\sum_{j=1}^n \din(x_j)=\frac{n(n-1)}{2}$. As a
result, the RHS of~\eqref{eqn:wff} is maximized when the in-degrees
take their extreme values, \ie any permutation of
$(0,1,\ldots,n-1)$. Therefore,
\begin{align*}
 \frac{e^{\lambda n}}{n} \sum_{j=1}^{n} e^{k'\lambda \din(x_j)} &\le \frac{e^{\lambda n}}{n} \sum_{j=0}^{n-1} e^{k'\lambda j}\\
 &=\frac{e^{\lambda n}}{n} \frac{e^{k'\lambda n}-1}{e^{k'\lambda}-1}.
 \end{align*}
Combining the above equations,
\begin{align*}
 \Phi(n)\le \frac{e^{\lambda n}}{n} \frac{e^{k'\lambda n}-1}{e^{k'\lambda}-1}.
\end{align*}
Similarly, by induction on $1\le i<n$,
\begin{align*}
 \Phi(i)\le \frac{e^{\lambda i}}{i} \frac{e^{k'\lambda i}-1}{e^{k'\lambda}-1}.
\end{align*}
In Lemma~\ref{haha} we show that for $1 \le i\le n$,
\begin{align}
\label{eqn:ke}
 \frac{e^{\lambda i}}{i} \frac{e^{k'\lambda i}-1}{e^{k'\lambda}-1} \le e^{k'\lambda i}.
\end{align}
Therefore, $\Phi(i)\le e^{k'\lambda i}$ for $1\le i\le n$ and in
particular, $\Phi(n)\le e^{k'\lambda n}$. Substituting $\EE[e^{\lambda
    Q_n}]=\Phi(n)$ in~\eqref{eqn:chernoff},
\begin{align*}
 \pr(Q_n>kn) &\le \frac{e^{k'\lambda n}}{e^{k\lambda n}}\\
 &=\frac{e^{k'\ln k'}}{e^{k\ln k'}}\\
 &=e^{-(k-k')\ln k'}.
\end{align*}
\noindent This proves the Lemma.
\qed

\noindent We now prove~\eqref{eqn:ke}.
Let $k'=\max\{e,\frac{k}{2}\}$ and $\lambda=\frac 1n\ln k'$. 
\begin{lemma}
\label{haha}
 For all $1\le i\le n$, $\frac{e^{\lambda i}}{i} \frac{e^{k'\lambda
     i}-1}{e^{k'\lambda}-1}\le e^{k'\lambda i} $.
\end{lemma}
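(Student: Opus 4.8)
The plan is to reduce the two-variable inequality~\eqref{eqn:ke} to a clean one-variable inequality via a finite-geometric-series identity and an integral bound, and then dispatch the one-variable inequality by a monotonicity argument on an auxiliary function.

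First I would rewrite the left-hand side using the finite geometric series $\frac{e^{k'\lambda i}-1}{e^{k'\lambda}-1}=\sum_{j=0}^{i-1}e^{k'\lambda j}$, so that~\eqref{eqn:ke} becomes $\sum_{j=0}^{i-1}e^{\lambda i+k'\lambda j}\le i\,e^{k'\lambda i}$. Dividing through by $e^{k'\lambda i}$ and re-indexing by $m=i-j$, this is equivalent to $\sum_{m=1}^{i}e^{\lambda(i-k'm)}\le i$. Since $x\mapsto e^{\lambda(i-k'x)}$ is decreasing (as $\lambda k'>0$), I would bound the sum by the integral $\int_0^i e^{\lambda(i-k'x)}\,dx=e^{\lambda i}\cdot\frac{1-e^{-\lambda k' i}}{\lambda k'}$, so it suffices to prove $e^{\lambda i}\bigl(1-e^{-\lambda k' i}\bigr)\le i\lambda k'$. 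Writing $u=\lambda i$ and recalling $\lambda=\frac1n\ln k'$ so that $i\lambda k'=k'u$, this is exactly $e^{u}-e^{-(k'-1)u}\le k'u$, which I then need for $u=\lambda i\in(0,\ln k']$ (the upper limit because $i\le n$), with $k'\ge e$.

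Next I would prove this last inequality by studying $\phi(u)\df k'u-e^{u}+e^{-(k'-1)u}$ and showing $\phi\ge 0$ on $[0,\ln k']$. One checks $\phi(0)=0$ and $\phi'(0)=0$, while $\phi''(0)=(k'-1)^2-1>0$ since $k'\ge e>2$, and $\phi'''(u)=-e^{u}-(k'-1)^3 e^{-(k'-1)u}<0$ for all $u$. Hence $\phi'$ is strictly concave with $\phi'(0)=0$ and positive derivative at $0$, so $\phi'$ is positive on an interval $(0,u^*)$ and (since $\phi'(u)\to-\infty$) negative on $(u^*,\infty)$; in particular $\phi$ is increasing then decreasing, so its minimum over any interval $[0,M]$ is attained at an endpoint. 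Since $\phi(0)=0$ and $\phi(\ln k')=k'(\ln k'-1)+(k')^{-(k'-1)}>0$ (using $\ln k'\ge 1$), we conclude $\phi\ge 0$ on $[0,\ln k']$, which gives~\eqref{eqn:ke}.

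The main obstacle is realizing that the naive simplification — replacing $e^{k'\lambda i}-1$ by $e^{k'\lambda i}$, equivalently summing the geometric tail to infinity — is too lossy: it reduces the claim to $e^{\lambda i}\le i\bigl(e^{k'\lambda}-1\bigr)$, which actually fails for intermediate values of $i$. One must retain the factor $1-e^{-\lambda k' i}$, after which the heart of the matter is the single-variable bound $e^{u}-e^{-(k'-1)u}\le k'u$. This is delicate because it holds with equality in the limit $u\to 0$ (with matching first-order Taylor terms) and is nearly tight at $u=\ln k'$ when $k'=e$, so it forces the convexity bookkeeping on $\phi$ rather than a one-line estimate.
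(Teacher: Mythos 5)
Your proof is correct, and it takes a genuinely different route from the paper's. The paper divides the claimed inequality by $e^{k'\lambda i}$ to get $f(t)=\frac{e^{\lambda t}}{t}\,\frac{1-e^{-k'\lambda t}}{e^{k'\lambda}-1}\le 1$ and then argues directly: it checks $f(t\to 0)\le 1$ and $f(n)\le 1$ (the latter using $e^{\lambda n}=k'$ and $e^{x}-1\ge x$), and shows $\ln f$ is convex (via convexity of $\ln\frac{1-e^{-u}}{u}$), so the maximum of $f$ on $(0,n]$ is attained at an endpoint. You instead expand $\frac{e^{k'\lambda i}-1}{e^{k'\lambda}-1}$ as a finite geometric series, dominate the resulting sum by $\int_0^i e^{\lambda(i-k'x)}\,dx$, and reduce to the one-variable inequality $e^{u}-e^{-(k'-1)u}\le k'u$ on $(0,\ln k']$, which you settle by a correct sign analysis of $\phi$ (the facts $\phi(0)=\phi'(0)=0$, $\phi''(0)=(k'-1)^2-1>0$ since $k'\ge e>2$, $\phi'''<0$, and $\phi(\ln k')=k'(\ln k'-1)+(k')^{-(k'-1)}>0$ all check out, and concavity of $\phi'$ does force the minimum of $\phi$ on $[0,\ln k']$ to an endpoint). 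Structurally both arguments ultimately rest on a convexity/endpoint step, but yours is fully self-contained where the paper leaves the log-convexity of $\frac{1-e^{-u}}{u}$ as an assertion; the trade-offs are that your geometric-series and sum-to-integral steps require $i$ to be an integer (which suffices here, since $i$ is a set size), whereas the paper's $f(t)\le 1$ holds for all real $t\in(0,n]$, and that your integral bound introduces slack that you must then verify is affordable — which you do. One small inaccuracy in your commentary only: the naive bound $e^{\lambda i}\le i\left(e^{k'\lambda}-1\right)$ already fails at $i=1$ for large $n$ (where $\lambda$ is small), not merely at intermediate $i$; this does not affect your proof.
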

\begin{proof}
 It suffices to show that for all $0<t\le n$,
\begin{align*}
 f(t)&\df\frac{e^{\lambda t}}{t} \frac{1-e^{-k'\lambda t}}{e^{k'\lambda}-1}<1.
\end{align*}
Observe that,
\begin{align*}
 \lim_{t\to 0} f(t)=\frac{k'\lambda}{e^{k'\lambda}-1} \le 1.
\end{align*}
On the other hand,
\begin{align*}
 f(n)&=\frac{e^{\lambda n}}{n} \frac{1-e^{-k'\lambda n}}{e^{k'\lambda}-1}\\
 & \le \frac{k'}{n}\frac{1}{e^{k'\ln k'/n}-1}\\
 & \le \frac{k'}{n}\frac{n}{k'\ln k'}\\
 & \le 1.
\end{align*}
Next, we show that $f(t)$ is convex. One can show that,
\begin{align*}
 \ln \frac{1-e^{-u}}{u},
\end{align*}
is a convex function of $u$. As a result,
\begin{align*}
 \ln \frac{1-e^{-k'\lambda t}}{t},
\end{align*}
is a convex function of $t$. Observe that $\ln e^{\lambda t}$ is also
convex. Therefore,
\begin{align*}
 \ln \frac{1-e^{-k'\lambda t}}{t}+\ln e^{\lambda t},
\end{align*}
is convex. As a result, logarithm of $f(t)$ is convex and therefore,
$f(t)$ is convex.

We showed that $f(t)$ is convex, $f(t\to 0)\le 1$, and
$f(n)\le1$. Therefore, for all $0<t\le n$, $f(t)\le 1$.
\end{proof}
\end{document}